\newcommand{\vect}[1]{\boldsymbol{#1}}
\newcommand{\mat}[1]{\boldsymbol{#1}}
\newcommand{\wt}[1]{\widetilde{#1}}
\newcommand{\wh}[1]{\widehat{#1}}
\DeclareMathOperator{\sgn}{\text{sgn}}
\newtheorem{remark}{Remark}
\newtheorem{theorem}{Theorem}
\newtheorem{lemma}{Lemma}
\newtheorem{definition}{Definition}
\newtheorem{corollary}{Corollary}
\newtheorem{assumption}{Assumption}
\begin{document}

\title{Leader Tracking of Euler-Lagrange Agents on Directed Switching Networks Using A Model-Independent Algorithm}
%
%
%

\author{\IEEEauthorblockN{Mengbin Ye $\quad$ }
\and
\IEEEauthorblockN{Brian D.O. Anderson, \emph{Life Fellow, IEEE} $\quad$ }
\and
\IEEEauthorblockN{Changbin Yu, \emph{Senior Member, IEEE}}

\thanks{This work was supported by the Australian Research Council (ARC) under the ARC grants \mbox{DP-130103610} and \mbox{DP-160104500}, by the National Natural Science Foundation of China (grant 61375072), and by Data61-CSIRO (formerly NICTA). M. Ye is supported by an Australian Government Research Training Program (RTP) Scholarship.

M.Ye is with the Research School of Engineering, Australian National University. C.Yu and B.D.O.Anderson are with the Australian National University and with Hangzhou Dianzi University, Hangzhou, China. B.D.O. Anderson is also with Data61-CSIRO (formerly NICTA Ltd.), Canberra, Australia.}
\thanks{\texttt{\{Mengbin.Ye, Brian.Anderson, Brad.Yu\}@anu.edu.au}}
}

\maketitle

\begin{abstract}
In this paper, we propose a discontinuous distributed model-independent algorithm for a directed network of Euler-Lagrange agents to track the trajectory of a leader with non-constant velocity. We initially study a fixed network and show that the leader tracking objective is achieved semi-globally exponentially fast if the graph contains a directed spanning tree. By model-independent, we mean that each agent executes its algorithm with no knowledge of the parameter values of any agent's dynamics. Certain bounds on the agent dynamics (including any disturbances) and network topology information are used to design the control gain. This fact, combined with the algorithm's model-independence, results in robustness to disturbances and modelling uncertainties. Next, a continuous approximation of the algorithm is proposed, which achieves practical tracking with an adjustable tracking error. Last, we show that the algorithm is stable for networks that switch with an explicitly computable dwell time. Numerical simulations are given to show the algorithm's effectiveness.  
\end{abstract}

\begin{IEEEkeywords}
model-independent, euler-lagrange agent, directed graph, distributed algorithm, tracking, switching network
\end{IEEEkeywords}

%
\IEEEpeerreviewmaketitle

\section{Introduction}
%
%
%
%

\IEEEPARstart{C}{oordination} of multi-agent systems using distributed algorithms has been widely studied over the past decade \cite{ren2011distributed_book}. Of recent interest is the study of agents whose dynamics are described using Euler-Lagrange equations of motion, which from here onwards will be referred to as Euler-Lagrange agents (in some literature known as Lagrangian agents). The non-linear Euler-Lagrange equation can be used to model the dynamics of a large class of mechanical, electrical and electro-mechanical systems \cite{ortega2013passivity_book}. Thus, there is significant motivation to study coordination problems with multiple Euler-Lagrange agent. The interaction between agents may be modelled using a graph \cite{ren2011distributed_book}, and the agents collectively form a network. Directed networks capture unilateral interactions (e.g. sensing or communication) between agents and are generally more desirable when compared to undirected networks.


To better place our results in context, two existing approaches for designing coordination algorithms for Euler-Lagrange networks are reviewed: model-dependent and adaptive algorithms. The aim is to give readers an idea of available works; the list is not exhaustive. The papers \cite{chung2009cooperative,hu2015FT_EL_Tracking,yang2017EL_global_track} study different coordination objectives, such as consensus or leader tracking, using algorithms that require \emph{exact knowledge of the agent models}. Specifically, each agent's algorithm requires knowledge of its own Euler-Lagrange equation in order to execute. The algorithms are therefore less robust to uncertainties in the model, e.g. some parameters in the Euler-Lagrange equation may be unknown or uncertain. Recently, the more popular approach is for each agent to use an \emph{adaptive algorithm}. Specifically, an Euler-Lagrange equation can be linearly parametrised \cite{ortega2013passivity_book} with respect to a set of constant parameters of the equation, e.g. the mass of an arm on a robotic manipulator agent. This parametrisation is then used in an adaptive algorithm to allow the agent to estimate its own set of constant parameters  (which is assumed to be unknown) while simultaneously achieving the multi-agent coordination objective. Using adaptive algorithms, containment control was studied in \cite{meng2012leader,mei2012distributed}, while leaderless consensus was studied in \cite{mei2012distributed,abdessameud2014synchronization}. Leader tracking algorithms were studied in \cite{chen2011distributed,nuno2011synchronization,meng2014leader_el,ghapani2016el_flocking_leader,abdessameud2016TAC_eltrack}.

In contrast to the above works, which rely on direct knowledge (or adaptive identification) of an agent model, there have been relatively few works studying \emph{model-independent} algorithms, that is, algorithms for obtaining robust controllers. Furthermore, most results study model-independent algorithms on \emph{undirected} networks. The pioneering work in \cite{ren2009distributed} considered leaderless position consensus, with time-delay considered in \cite{nuno2013consensus_el}. Consensus to the intersection of target sets is studied in \cite{meng2014set_target_aggregation_EL}. 
Leader-tracking algorithms are studied in \cite{mei2011distributed_EL,zhao2015FT_ELTracking,klotz2015_leaderfollower_EL_modelfree,feng2016EL_modelfree_tracking}. Rendezvous to a stationary leader with collision avoidance is studied in \cite{hokayem2010coordination_lagrangian}. For directed networks, several results are available. Passivity analysis in \cite{chopra2012outputsync} showed that synchronisation of the velocities (but not of the positions) is achieved on strongly connected directed networks. Rendezvous to a stationary leader and position consensus was studied in \cite{ye2015ELrendezvous} and \cite{ye2016EL_consensus}, respectively, but the papers assumed that the agents did not have a gravitational term in the dynamics. Leader tracking is studied in \cite{feng2016EL_modelfree_tracking} but restrictive assumptions are placed on the leader. Preliminary work by the authors also appeared in \cite{ye2016ELtracking}, and is further analysed below. 

\subsection{Motivation for Model-Independent Algorithms}
Further study of model-independent algorithms is desirable for several reasons. Given a unique Euler-Lagrange equation, determining the minimum number of parameters in an adaptive algorithm is difficult in general \cite{spong2006robot}. Moreover, the adaptive algorithms require knowledge of the exact equation structure; the algorithms can deal with uncertain constant parameters associated with the agent dynamics but are not robust to unmodelled nonlinear agent dynamics. 
Model-independent algorithms are reminiscent of robust controllers, which stand in conceptual contrast to adaptive controllers. Stability and indeed performance is guaranteed given limited knowledge of upper bounds on parameters of the multiagent system, and without use of any attempt to identify these parameters.

As will be shown in this paper, and similarly to \cite{ye2015ELrendezvous,ye2016EL_consensus}, model-independent controllers are exponentially stable, with a computable minimum rate of convergence. Exponentially stable systems are desired over systems which are asymptotically stable, but not exponentially so, because exponentially stable systems offer improved rejection to small amounts of noise and disturbance. Some algorithms \emph{requiring exact knowledge of the Euler-Lagrange equation} have been shown to be exponentially stable \cite{chung2009cooperative,yang2017EL_global_track}. Further, adaptive controllers will yield exponential stability if certain conditions are satisfied, e.g. persistency of excitation. However, the above detailed works using adaptive algorithms have not verified such conditions. 

\subsection{Contributions of this paper}

In this paper, we propose a discontinuous model-independent algorithm that allows a directed network of Euler-Lagrange agents to track a leader with arbitrary trajectory. First, we assume that the network is fixed and contains a directed spanning tree. Then, we relax this assumption to allow for a network with switching/dynamic interactions. In order to achieve stability, a set of scalar control gains must be sufficiently large, i.e. satisfy a set of lower bounding inequalities. These inequalities involve limited knowledge of the bounds on the agent dynamic parameters, limited knowledge of the network topology, and a bound on the initial conditions (which may be arbitrarily large). This last requirement means the algorithm is semi-globally stable; a larger set of allowed initial conditions simply requires recomputing of the control gains. It is also shown that the algorithm is robust to heterogeneous, bounded disturbances for each individual agent. 

We now record the points of contrast between this paper and the previously mentioned existing works. While several results have been listed studying leader tracking, most involving model-independent algorithms have been studied on undirected graphs. Those which do assume directed graphs primarily use \emph{adaptive algorithms}. Most model-independent algorithms on directed networks consider position consensus or rendezvous to a stationary leader; introduction of a moving leader greatly increases the difficulty of the problem due to the complex, nonlinear Euler-Lagrange dynamics. Additionally, the work in \cite{ye2015ELrendezvous,ye2016EL_consensus} did not consider the gravitational term in the agent dynamics. The work \cite{feng2016EL_modelfree_tracking} studies a model-independent leader tracking algorithm on directed graphs, with the restrictive assumption that the leader trajectory is governed by a marginally stable linear time-invariant second order system, and the system matrix known to all agents. A \emph{major contribution of this paper is to allow for any arbitrary leader trajectory which satisfies some mild and reasonable smoothness and boundedness properties}. In addition, \cite{feng2016EL_modelfree_tracking} does not establish an exponential stability property, whereas the algorithm proposed in this paper does. 


A preliminary version of this paper appeared in \cite{ye2016ELtracking}. This paper significantly extends the preliminary version in several aspects. First, we introduce an additional control gain which allows for an additional degree of freedom in selecting the control gains to ensure stability. Moreover, increasing the new gain ensures stability but at the same time, it does not negatively affect convergence rate, unlike in \cite{ye2016ELtracking}. Second, we address the issues arising from the discontinuous nature of the control algorithm by using an approximation of the signum function. An explicit expression relating the tracking error to the degree of approximation and control gain is derived. Additionally, switching topology is considered.  Details of omitted proofs are also now provided.

The paper is structured as follows. Section~\ref{sec:prelim} introduces mathematical preliminaries, and the problem. The problem with fixed network topology, and dynamic topology, is solved in Section~\ref{sec:main} and \ref{sec:switch_top}, respectively. Simulations are provided in Section~\ref{sec:sim} and the paper concluded in Section~\ref{sec:conclusion}.

\section{Background and Problem Statement}\label{sec:prelim}

\subsection{Mathematical Notation and Matrix Theory}\label{subsec:matrix_theory_background}
To begin with, definitions of notation and several results are now provided. The Kronecker product is denoted as $\otimes$.
Denote the $p\times p$ identity matrix as $\mat{I}_p$ and the $n$-column vector of all ones as $\vect 1_n$. The $l_1$-norm and Euclidean norm of a vector $\vect{x}$, and matrix $\mat{A}$, are denoted by $\Vert \cdot \Vert_1$ and \mbox{$\Vert \cdot \Vert_2$}, respectively. 
The signum function is denoted as $\text{sgn}(\cdot)$. For an arbitrary vector $\vect x$, the function $\text{sgn}(\vect x)$ is defined element-wise. A matrix $\mat{A} = \mat{A}^\top$ that is positive definite (respectively nonnegative definite) is denoted by $\mat{A} > 0$ (respectively $\mat A \geq 0$). For two symmetric matrices $\mat{A}, \mat{B}$, $\mat{A} > \mat{B}$ is equivalent to \mbox{$\mat{A} - \mat{B} > 0$}. For a matrix $\mat{A} =\mat{A}^\top$, the minimum and maximum eigenvalues are $\lambda_{\min}(\mat{A})$ and $\lambda_{\max}(\mat{A})$ respectively. The following inequalities hold
\begin{subequations}
\begin{align}
\lambda_{\min}(\mat{A}) &> \lambda_{\max}(\mat{B}) \Rightarrow \mat{A} > \mat{B} \label{eq:lambda_PD} \\
\lambda_{\max}(\mat{A}+\mat{B}) &\leq \lambda_{\max}(\mat{A})+\lambda_{\max}(\mat{B}) \label{eq:lambda_sum_max} \\
\lambda_{\min}(\mat{A}+\mat{B}) & \geq \lambda_{\min}(\mat{A})+\lambda_{\min}(\mat{B}) \label{eq:lambda_sum_min} \\
\lambda_{\min}(\mat{A})\vect{x}^\top\vect{x} &\leq \vect{x}^\top\mat{A}\vect{x}\leq \lambda_{\max}(\mat{A})\vect{x}^\top\vect{x} \label{eq:lambda_xTx}
\end{align}
\end{subequations}

\begin{definition}
A function $f(\vect{x}) : \mathcal{D} \to \mathbb{R}$, where $\mathcal{D} \subseteq \mathbb{R}^n$, is said to be positive definite in $\mathcal{D}$ if $f(\vect{x}) > 0$ for all $\vect{x} \in \mathcal{D}$, except $f(\vect{0}) = 0$.
\end{definition}


\begin{lemma}[The Schur Complement \cite{horn2012matrixbook}]\label{theorem:schur_complement} 
Consider a symmetric block matrix, partitioned as 
\begin{equation}\label{eq:schur_def_A}
\mat{A} = \begin{bmatrix}\mat{B} & \mat{C} \\ \mat{C}^\top & \mat{D}\end{bmatrix}
\end{equation}
Then, $\mat{A} > 0 $  if and only if $\mat{B} > 0$ and $\mat{D} - \mat{C}^\top\mat{B}^{-1}\mat{C} > 0$, or equivalently, if and only if $\mat{D} > 0$ and $\mat{B} - \mat{C}\mat{D}^{-1}\mat{C} > 0$.
\end{lemma}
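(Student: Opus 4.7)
The plan is to prove the first equivalence by a congruence transformation, and then argue the second equivalence follows by symmetry in the role of the two diagonal blocks.

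First, for the ``only if'' direction, I would observe that if $\mat{A} > 0$, then $\vect{x}^\top \mat{B} \vect{x} = [\vect{x}^\top\ \vect{0}^\top] \mat{A} [\vect{x}^\top\ \vect{0}^\top]^\top > 0$ for every nonzero $\vect{x}$, so $\mat{B} > 0$ and, in particular, $\mat{B}$ is invertible. With $\mat{B}$ invertible, define the block upper-triangular matrix
\begin{equation*}
\mat{L} = \begin{bmatrix} \mat{I} & -\mat{B}^{-1}\mat{C} \\ \mat{0} & \mat{I} \end{bmatrix},
\end{equation*}
which is nonsingular. A direct block multiplication gives the identity
\begin{equation*}
\mat{L}^\top \mat{A}\, \mat{L} = \begin{bmatrix} \mat{B} & \mat{0} \\ \mat{0} & \mat{D} - \mat{C}^\top \mat{B}^{-1}\mat{C} \end{bmatrix}.
\end{equation*}

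Next, I would invoke the standard fact that congruence by a nonsingular matrix preserves positive definiteness (equivalently, Sylvester's law of inertia): for any nonsingular $\mat{L}$, $\mat{A} > 0$ if and only if $\mat{L}^\top \mat{A}\, \mat{L} > 0$. Combined with the elementary observation that a symmetric block-diagonal matrix is positive definite if and only if each diagonal block is positive definite, the displayed congruence gives:
\begin{equation*}
\mat{A} > 0 \iff \mat{B} > 0 \text{ and } \mat{D}-\mat{C}^\top\mat{B}^{-1}\mat{C} > 0.
\end{equation*}
For the ``if'' direction, $\mat{B} > 0$ supplies the invertibility needed to form $\mat{L}$, and the same congruence chain runs in reverse.

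For the second equivalence, I would repeat the argument with the roles of $\mat{B}$ and $\mat{D}$ swapped, using instead
\begin{equation*}
\mat{M} = \begin{bmatrix} \mat{I} & \mat{0} \\ -\mat{D}^{-1}\mat{C}^\top & \mat{I} \end{bmatrix},
\end{equation*}
so that $\mat{M}^\top \mat{A}\, \mat{M}$ is block diagonal with blocks $\mat{B} - \mat{C}\mat{D}^{-1}\mat{C}^\top$ and $\mat{D}$, and applying the same congruence/inertia argument. Since everything reduces to a one-line block computation together with the invariance of definiteness under congruence, there is really no serious obstacle here; the only point requiring care is ensuring the invertibility of the pivot block before writing down $\mat{L}$ or $\mat{M}$, which is handled by noting that positive definiteness of the whole matrix forces positive definiteness of the corresponding principal diagonal block.
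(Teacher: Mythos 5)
Your proof is correct and is the standard congruence argument for the Schur complement criterion; the paper itself offers no proof of this lemma, simply citing it as a known result from \cite{horn2012matrixbook}, and your argument is essentially the one found there. One small point in your favour: the block-diagonal form you obtain correctly contains $\mat{B}-\mat{C}\mat{D}^{-1}\mat{C}^\top$, whereas the lemma as printed writes $\mat{B}-\mat{C}\mat{D}^{-1}\mat{C}$, which is a typo (it is not even dimensionally consistent unless $\mat{C}$ is square).
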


\begin{lemma}\label{lem:W_bound}
Suppose $\mat A > 0$ is defined as in \eqref{eq:schur_def_A}. Let a quadratic function with arguments $\vect x, \vect y$ be expressed as $W = [\vect x^\top, \vect y^\top] \mat A [\vect x^\top, \vect y^\top]^\top$. Define $\mat F: = \mat B - \mat C \mat D^{-1} \mat C^\top $ and $\mat G := \mat{D}-\mat{C}^\top\mat{B}^{-1}\mat{C}$. Then, there holds
\begin{subequations}\label{eq:W_bound_xy}
\begin{align} \label{eq:W_bound_x}
 \lambda_{\min}(\mat F) \vect x^\top \vect x \leq  \vect{x}^\top\mat F\vect{x}  & \leq W  \\
\lambda_{\min}(\mat G) \vect y^\top \vect y \leq  \vect{y}^\top\mat G\vect{y}  & \leq W \label{eq:W_bound_y} 
\end{align}
\end{subequations}
\begin{proof} We obtain \eqref{eq:W_bound_y} by recalling Lemma~\ref{theorem:schur_complement} and observing that $W = \vect{y}^\top\mat G\vect{y}  + [\vect{y}^\top\mat{C}^\top\mat{B}^{-1}+\vect{x}^\top]\mat{B}[\mat{B}^{-1}\mat{C}\vect{y}+\vect{x}]$.
An equally straightforward proof yields \eqref{eq:W_bound_x}.
\end{proof}
\end{lemma}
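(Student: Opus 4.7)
The plan is to prove each chain of inequalities in \eqref{eq:W_bound_xy} by a single completing-the-square identity on $W$, exactly as hinted in the statement. Both bounds are symmetric in role, so I only sketch the $\vect y$ bound \eqref{eq:W_bound_y} in detail; \eqref{eq:W_bound_x} will follow by swapping the roles of $\vect x$ and $\vect y$, and of $(\mat B, \mat D)$.

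First I would expand $W$ using the block structure of $\mat A$ in \eqref{eq:schur_def_A}, namely
\begin{equation*}
W = \vect{x}^\top \mat B \vect{x} + 2 \vect{x}^\top \mat C \vect{y} + \vect{y}^\top \mat D \vect{y}.
\end{equation*}
Since $\mat A > 0$ implies that its leading principal submatrix $\mat B$ is positive definite, $\mat B^{-1}$ exists and I may complete the square in $\vect x$. A direct rearrangement yields
\begin{equation*}
W = \bigl[\mat B^{-1}\mat C \vect y + \vect x\bigr]^\top \mat B \bigl[\mat B^{-1}\mat C \vect y + \vect x\bigr] + \vect{y}^\top \mat G \vect{y},
\end{equation*}
which is precisely the identity written in the statement. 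Because $\mat B > 0$, the first (quadratic) term is nonnegative, so $W \ge \vect y^\top \mat G \vect y$, establishing the right inequality of \eqref{eq:W_bound_y}.

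For the left inequality of \eqref{eq:W_bound_y}, I invoke Lemma~\ref{theorem:schur_complement}: since $\mat A > 0$, the Schur complement $\mat G = \mat D - \mat C^\top \mat B^{-1} \mat C$ is itself positive definite. Then \eqref{eq:lambda_xTx} gives $\lambda_{\min}(\mat G) \vect y^\top \vect y \le \vect y^\top \mat G \vect y$, completing the chain.

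Finally, \eqref{eq:W_bound_x} is obtained by exactly the same argument with the roles of $\vect x$ and $\vect y$ (and of $\mat B$ and $\mat D$) interchanged: one completes the square in $\vect y$ using $\mat D^{-1}$ (which exists since $\mat D > 0$, again by Schur) to write $W = [\mat D^{-1}\mat C^\top \vect x + \vect y]^\top \mat D [\mat D^{-1}\mat C^\top \vect x + \vect y] + \vect x^\top \mat F \vect x$, and then uses $\mat F > 0$ together with \eqref{eq:lambda_xTx}. There is no real obstacle here: the only thing to be careful about is the bookkeeping to verify the completing-the-square identity and to note that both $\mat B > 0$ and $\mat D > 0$ are guaranteed from $\mat A > 0$ via Lemma~\ref{theorem:schur_complement} (or directly as principal submatrices), so both Schur complements and their inverses are well defined.
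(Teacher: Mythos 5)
Your proof is correct and takes exactly the paper's approach: the completing-the-square identity you derive is the very identity the paper's proof states, and your use of the Schur complement lemma for positive definiteness of $\mat G$ (and $\mat F$) together with \eqref{eq:lambda_xTx} fills in the details the paper leaves implicit. Nothing is missing.
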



\begin{lemma}\label{lem:bounded_pd_function}
Let $g(x,y)$ be a function given as
\begin{equation}
g(x,y) = a x^2 + b y^2 - c x y^2 - d x y
\end{equation} 
for real positive scalars $c,d > 0$. Then for a given $\mathcal{Y} > 0$, there exist $a, b > 0$ such that $g(x,y)$ is positive definite for all $y \in [0, \infty)$ and $x \in [0,\mathcal{X}]$. 
 \begin{proof}
 Observing that $cxy^2 \leq c\mathcal{X}y^2$ for all $x \in [0,\mathcal{X}]$, yields
 \begin{equation}\label{eq:bounded_pd_ineq}
 g(x,y) \geq a x^2 + (b - c\mathcal{X}) y^2 - d x y
 \end{equation} 
 if $y \in [0, \infty)$ and $x \in [0,\mathcal{X}]$ because $c > 0$. For any fixed value of $y = y_1 \in [0,\infty)$, write $\bar g(x) = a x^2 + (b - c\mathcal{X}) y_1^2 - d x y_1$. The discriminant of $\bar g(x)$ is negative if 
 \begin{equation}\label{eq:b_bound_lemma_pd}
 b > c\mathcal{X} + \frac{d^2}{4a}
 \end{equation}
 which implies that the roots of $\bar g(x)$ are complex, i.e. $\bar g(x) > 0$ and this holds for any $y_1 \in [0,\infty)$. We thus conclude that for all $y \in [0, \infty)$ and $x \in [0,\mathcal{X}]$, if $a, b$ satisfies \eqref{eq:b_bound_lemma_pd}, then $g(x,y) > 0$ except the case where $g(x,y) = 0$ if and only if $x = y = 0$. 
 \end{proof}
\end{lemma}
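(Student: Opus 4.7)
The plan is to exploit the one-sided boundedness of $x$ to dominate the indefinite cubic term $-cxy^2$ by a pure quadratic in $y$, thereby reducing the claim to the positive-definiteness of an ordinary $2 \times 2$ quadratic form whose coefficients we are free to tune via $a$ and $b$. The key observation is that since $c > 0$ and $0 \leq x \leq \mathcal{X}$, one has $-cxy^2 \geq -c\mathcal{X} y^2$, and therefore
$$g(x,y) \;\geq\; a x^2 \;+\; (b - c\mathcal{X}) y^2 \;-\; d x y \;=:\; \tilde g(x,y)$$
on the strip $[0,\mathcal{X}] \times [0,\infty)$.

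Next I would establish that $\tilde g$ is positive definite on all of $\mathbb{R}^2$, which is strictly stronger than what the lemma demands. Writing $\tilde g$ as $[x,y]\, \mat M\, [x,y]^\top$ with $\mat M = \bigl[\begin{smallmatrix} a & -d/2 \\ -d/2 & b - c\mathcal{X}\end{smallmatrix}\bigr]$, Sylvester's criterion reduces positive-definiteness to $a > 0$ and $a(b - c\mathcal{X}) > d^2/4$, that is,
$$b \;>\; c\mathcal{X} \;+\; \frac{d^2}{4a}.$$
Equivalently (and this is the presentation I would use, since it avoids invoking any matrix machinery), fix $y = y_1$ and view $\tilde g$ as a quadratic in $x$; its discriminant $d^2 y_1^2 - 4a(b - c\mathcal{X})y_1^2$ is negative for every $y_1 \neq 0$ precisely under the same inequality, so $\tilde g(\cdot, y_1) > 0$ for all $x$ whenever $y_1 \neq 0$, while $\tilde g(x,0) = a x^2 \geq 0$ with equality iff $x=0$.

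Since $a$ and $b$ are at our disposal, the inequality above is trivially satisfiable: pick any $a > 0$ and then any $b$ exceeding $c\mathcal{X} + d^2/(4a)$. Chaining the two steps gives $g(x,y) \geq \tilde g(x,y) > 0$ on $[0,\mathcal{X}] \times [0,\infty) \setminus \{(0,0)\}$, with $g(0,0) = 0$, which is the stated positive-definiteness.

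The only genuinely non-routine step is the first one: recognising that the cubic obstruction $-cxy^2$ can be absorbed into a shift of $b$ precisely because $x$ is confined to a bounded interval, whereas $y$ is not. This asymmetry is what dictates the form of the lower bound on $b$ and why the conclusion would fail if $x$ were also allowed to be unbounded; once one sees this, the remainder is a standard discriminant (or Schur-type) argument.
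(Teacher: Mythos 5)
Your proposal is correct and follows essentially the same route as the paper's own proof: bound $-cxy^2$ below by $-c\mathcal{X}y^2$ using $x\in[0,\mathcal{X}]$, then apply a discriminant argument in $x$ to the resulting quadratic form, arriving at the identical condition $b > c\mathcal{X} + d^2/(4a)$. Your explicit handling of the $y_1=0$ case (where the discriminant vanishes rather than being negative) is a small point of extra care that the paper's proof glosses over, but the argument is the same.
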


\begin{corollary}\label{cor:bounded_pd_function_2}
Let $h(x,y)$ be a function given as
\begin{equation}
h(x,y) = a x^2 + b y^2 - c x y^2 - d x y - e x - f y 
\end{equation} 
where the real positive scalars $c,d,e,f$ and two further positive scalars $\varepsilon, \vartheta$ are fixed. Suppose that for given $\mathcal{Y}, \mathcal{X}$ there holds $\mathcal{Y}- \varepsilon > 0$, and $\mathcal{X}-\vartheta > 0$. Define the sets $\mathcal{U} = \{x,y : x\in [\mathcal{X}-\vartheta, \mathcal{X}], y > 0\}$ and $\mathcal{V} = \{x, y : x > 0, y\in [\mathcal{Y}-\varepsilon,\mathcal{Y}]\}$. Define the region $\mathcal{R} = \mathcal{U} \cup \mathcal{V}$. Then, there exist $a, b > 0$ such that $h(x,y)$ is positive definite in $\mathcal{R}$.
 \begin{proof}
 Observe that $h(x,y) = g(x,y) - e x -f y$ where $g(x,y)$ is defined in Lemma~\ref{lem:bounded_pd_function}. Let $a^*, b^*$ be such that they satisfy condition \eqref{eq:b_bound_lemma_pd} in Lemma \ref{lem:bounded_pd_function} and thus $g(x,y) > 0$ for $x\in [0,\mathcal{X}]$ and $y\in[0,\infty)$. Note that the positivity condition on $g(x,y)$ in Lemma \ref{lem:bounded_pd_function} continues to hold for any $a\geq a^*$ and any $b \geq b^*$. Let $a_1$ and $b_1$ be positive scalars whose magnitudes will be determined later. Define $a = a_1 + a^*$ and $b = b_1 + b^*$. Define $z(x,y) \triangleq a_1 x^2 + b_1 y^2 - e x - f y$.  Next, consider $(x,\bar y) \in \mathcal{V}$, where $\bar y$ is some fixed value. It follows that
 \begin{equation}
 z(x, \bar y) = a_1 x^2 -  e x + (b_1\bar y^2-f\bar y)
 \end{equation} 
 Note the discriminant of $z(x,\bar y)$ is $\mathcal{D}_x =  e^2 - 4 a_1 ( b_1 \bar y^2 - f \bar y)$. It follows that $\mathcal{D}_x < 0$ if $b_1 \bar y^2 > f\bar y + e^2/4a_1$. This is satisfied, independently of $\bar y \in [\mathcal{Y}-\varepsilon, \mathcal{Y}]$, for any $b_1 \geq b_{1,y}$, $a_1 \geq a_{1,y}$ where 
 \begin{equation}\label{eq:Du_b_ineq}
 b_{1,y} > \frac{ e^2 }{4a_{1,y} (\mathcal{Y}- \varepsilon)^2 } + \frac{f}{\mathcal{Y}-\varepsilon}
 \end{equation}
 because $\mathcal{Y}-\varepsilon \leq \bar y$. It follows that $\mathcal{D}_x < 0 \Rightarrow z(x,y) > 0$ in $\mathcal{V}$. Now, consider $(\bar x,y) \in \mathcal{U}$ for some fixed value $\bar x$. It follows that
 \begin{equation}
 z(\bar x,y) = b_1 y^2 - f y + (a_1 \bar x^2 - e\bar x)
 \end{equation}
 and note the discriminant of $z(\bar x,y)$ is $\mathcal{D}_y = f^2- 4 b_1 (a_1\bar x^2-e\bar x)$. Suppose that $a_1 > e/\mathcal{X}$, which ensures that $a_1\bar x^2-e\bar x > 0$. Then, $\mathcal{D}_y < 0$ if $b_1(a_1\bar x^2 - e\bar x) > f/4$. This is satisfied, independently of $\bar x \in [\mathcal{X} - \vartheta, \mathcal{X}]$, for any $b_1 \geq b_{1,x}$, $a_1 \geq a_{1,x}$ where
 \begin{equation}\label{eq:Dv_b_ineq}
 b_{1,x} > \frac{f}{4(a_{1,x} (\mathcal{X} - \vartheta)^2 - e (\mathcal{X} - \vartheta))}
 \end{equation}
 It follows that $\mathcal{D}_y < 0 \Rightarrow z(x,y) > 0$ in $\mathcal{U}$. We conclude that setting \mbox{$b = b^* + \max [b_{1,x}, b_{1,y}]$} and \mbox{$a = a^* + \max [a_{1,x}, a_{1,y}]$}, implies $h(x, y) > 0$ in $\mathcal{R}$, except $h(0,0) = 0$.
 \end{proof}
\end{corollary}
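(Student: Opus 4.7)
The plan is to follow a decomposition strategy: split $h$ into the $g$-piece already handled by Lemma~\ref{lem:bounded_pd_function} plus an extra term that absorbs the linear perturbations $-ex - fy$. Concretely, write $a = a^* + a_1$ and $b = b^* + b_1$ with $a_1, b_1 > 0$ to be chosen, and observe that
\begin{equation*}
h(x,y) = g(x,y) + z(x,y), \qquad z(x,y) \triangleq a_1 x^2 + b_1 y^2 - ex - fy,
\end{equation*}
where $g$ uses the coefficients $a^*, b^*$. First pick $a^*, b^*$ satisfying \eqref{eq:b_bound_lemma_pd} so that Lemma~\ref{lem:bounded_pd_function} applies and $g \geq 0$ on the relevant domain; a key observation is that the positivity condition \eqref{eq:b_bound_lemma_pd} is monotone, so replacing $a^*, b^*$ by larger values preserves $g \geq 0$. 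Hence it suffices to produce $a_1, b_1 > 0$ large enough that $z(x,y) > 0$ on $\mathcal{R}$, except at the origin.

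The set $\mathcal{R}$ is then handled by treating $\mathcal{V}$ and $\mathcal{U}$ separately, in each case freezing the coordinate that lies in a bounded interval and viewing $z$ as a quadratic in the other. On $\mathcal{V}$, fix $\bar y \in [\mathcal{Y} - \varepsilon, \mathcal{Y}]$ and consider $z(x, \bar y) = a_1 x^2 - e x + (b_1 \bar y^2 - f \bar y)$: the standard discriminant criterion produces a lower bound of the form $b_1 \bar y^2 > f \bar y + e^2/(4 a_1)$, and because $\bar y \geq \mathcal{Y} - \varepsilon > 0$, this can be made to hold uniformly in $\bar y$ by choosing $a_1, b_1$ large enough, yielding an inequality like \eqref{eq:Du_b_ineq}. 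On $\mathcal{U}$, fix $\bar x \in [\mathcal{X}-\vartheta, \mathcal{X}]$ and consider $z(\bar x, y) = b_1 y^2 - f y + (a_1 \bar x^2 - e \bar x)$; here the subtlety is that the constant term must first be made positive, which I would arrange by demanding $a_1 > e / (\mathcal{X}-\vartheta)$, after which a further discriminant condition of the form $b_1 (a_1 \bar x^2 - e \bar x) > f^2/4$ delivers positivity uniformly in $\bar x$, of the type \eqref{eq:Dv_b_ineq}.

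Finally, take the maximum of the two lower bounds arising from the $\mathcal{U}$ and $\mathcal{V}$ analyses to obtain admissible $a_1, b_1$, and set $a = a^* + a_1$, $b = b^* + b_1$. This yields $h = g + z > 0$ throughout $\mathcal{R}$. The main obstacle I anticipate is not any single inequality but the bookkeeping: the threshold on $b_1$ in \eqref{eq:Dv_b_ineq} depends on $a_1$, while $a_1$ itself must already exceed the $e/(\mathcal{X}-\vartheta)$ threshold needed to guarantee $a_1 \bar x^2 - e\bar x > 0$ uniformly on $[\mathcal{X}-\vartheta, \mathcal{X}]$; one must therefore fix $a_1$ first, then pick $b_1$ large relative to that choice, and verify that the selections from the $\mathcal{V}$ analysis remain compatible, which is straightforward since both conditions are monotone increasing in $a_1, b_1$.
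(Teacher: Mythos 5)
Your proposal is correct and follows essentially the same route as the paper's proof: the decomposition $h = g + z$ with $z = a_1x^2 + b_1y^2 - ex - fy$, monotonicity of the condition \eqref{eq:b_bound_lemma_pd}, and separate discriminant arguments on $\mathcal{U}$ and $\mathcal{V}$ obtained by freezing the coordinate confined to a bounded interval, then taking the maximum of the resulting thresholds. Your requirements $a_1 > e/(\mathcal{X}-\vartheta)$ and $b_1(a_1\bar x^2 - e\bar x) > f^2/4$ are in fact the correct forms of two small slips in the paper's text (which writes $a_1 > e/\mathcal{X}$ and $f/4$), although the paper's final inequality \eqref{eq:Dv_b_ineq} already implicitly encodes the right condition.
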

The results of Lemma~\ref{lem:bounded_pd_function} and Corollary~\ref{cor:bounded_pd_function_2} are almost intuitively obvious.  However, the detailed statements in the proof lay out explicit inequalities for $a,b$. These inequalities will be used to show that for any given Euler-Lagrange network, \emph{control gains can always be found} to ensure leader tracking is achieved.

\subsection{Graph Theory}\label{subsec:graph_theory}
The agent interactions can be modelled by a weighted directed graph which is denoted as $\mathcal{G} = (\mathcal{V,E,A})$, with the set of nodes $\mathcal{V} = \{v_0,v_1,\ldots,v_n\}$, and with a corresponding set of ordered edges $\mathcal{E} \subseteq \mathcal{V \times V}$. 
A directed edge $e_{ij} = (v_i, v_j)$ is outgoing with respect to $v_i$ and incoming with respect to $v_j$, and implies that $v_j$ is able to obtain some information from $v_i$. The precise nature of this information will be made clear in the sequel. The weighted adjacency matrix $\mathcal{A}\in\mathbb{R}^{(n+1)\times (n+1)}$ of $\mathcal{G}$ has nonnegative elements $a_{ij}$. The elements of $\mathcal{A}$ have properties such that $a_{ij} > 0 \Leftrightarrow e_{ji}\in \mathcal{E}$ while $a_{ij} = 0$ if $e_{ji} \notin \mathcal{E}$ and it is assumed $a_{ii} = 0,  \forall i$. The neighbour set of $v_i$ is denoted by $\mathcal{N}_i = \{v_j \in \mathcal{V} : (v_j, v_i) \in \mathcal{E} \}$. The $(n+1)\times (n+1)$ Laplacian matrix, $\mathcal{L} = \{l_{ij}\}$, of the associated digraph $\mathcal{G}$ is defined as $l_{ij} = -a_{ij}$ for $j\neq i$ and $l_{ij} =  \sum^n_{k=1,k\neq i} \:a_{ik}$ if $j=i$. 
A directed spanning tree is a directed graph formed by directed edges of the graph that connects all the nodes, and where every vertex apart from the root has exactly one parent. A graph is said to contain a spanning tree if a subset of the edges forms a spanning tree. We make use of the following standard lemma.

\begin{lemma}[\cite{zhang2015graph_lyapunov}]\label{cor:dst_laplacian_pd}
Let $\mathcal{G}$ contain a directed spanning tree, and suppose there are no edges which are incoming to the root vertex of the tree, which without loss of generality, is set as $v_0$. Then the Laplacian of $\mathcal{G}$ can be partitioned as
\begin{equation}\label{eq:laplacian_partition}
\mathcal{L} = \begin{bmatrix} 0 & \mat{0} \\ \mathcal{L}_{11} & \mathcal{L}_{22}\end{bmatrix}
\end{equation}
and $\exists\, \mat{\Gamma} > 0$ which is diagonal and $\mat{\Gamma}\mathcal{L}_{22}+\mathcal{L}^{\top}_{22}\mat{\Gamma} > 0$.
\end{lemma}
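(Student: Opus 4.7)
The plan is to establish the partition first and then construct $\mat{\Gamma}$ explicitly using M-matrix theory. The partition is immediate: since no edges are incoming to $v_0$, we have $a_{0j}=0$ for all $j$, so $l_{0j} = -a_{0j}= 0$ for $j\neq 0$ and $l_{00} = \sum_{k} a_{0k} = 0$. Thus the zeroth row of $\mathcal{L}$ is zero, yielding the block form in \eqref{eq:laplacian_partition}, where $\mathcal{L}_{11} \in \mathbb{R}^{n\times 1}$ has entries $-a_{i0}$ and $\mathcal{L}_{22} \in \mathbb{R}^{n\times n}$ is the principal submatrix obtained by deleting the row and column of $v_0$.

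The next step is to argue that $\mathcal{L}_{22}$ is a nonsingular M-matrix. By construction its off-diagonal entries are $-a_{ij} \leq 0$ and its diagonal entries are positive (each non-root vertex has a predecessor along the spanning tree). Since $\mathcal{G}$ contains a directed spanning tree rooted at $v_0$, the Laplacian $\mathcal{L}$ has a simple zero eigenvalue with right eigenvector $\vect{1}$, and all remaining eigenvalues have strictly positive real parts (a standard result from multi-agent consensus theory). Because $\mathcal{L}$ is block-lower-triangular with a zero $(0,0)$ entry, the spectrum of $\mathcal{L}_{22}$ consists of exactly these nonzero eigenvalues. Hence $\mathcal{L}_{22}$ is invertible with $\mathcal{L}_{22}^{-1} \geq 0$ (entrywise), which is the defining property of a nonsingular M-matrix.

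The core construction is then to exhibit a positive vector $\vect{\xi} > 0$ with $\mathcal{L}_{22}^{\top}\vect{\xi} > 0$ componentwise. I would simply set $\vect{\xi} := (\mathcal{L}_{22}^{-1})^{\top}\vect{1}$. Then $\mathcal{L}_{22}^{\top}\vect{\xi} = \vect{1} > 0$, and since the nonnegative matrix $(\mathcal{L}_{22}^{-1})^{\top}$ is nonsingular (so no row is zero), $\vect{\xi}$ is strictly positive. Define $\mat{\Gamma} = \operatorname{diag}(\vect{\xi}) > 0$.

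It remains to verify that $\mat{M} := \mat{\Gamma}\mathcal{L}_{22} + \mathcal{L}_{22}^{\top}\mat{\Gamma}$ is positive definite. It is symmetric by construction, and its off-diagonal entries $\xi_i (\mathcal{L}_{22})_{ij} + \xi_j (\mathcal{L}_{22})_{ji}$ are nonpositive. The sum of the $i$th row of $\mat{M}$ is
\begin{equation}
\xi_i (\mathcal{L}_{22}\vect{1})_i + (\mathcal{L}_{22}^{\top}\vect{\xi})_i = \xi_i\, a_{i0} + 1 > 0,
\end{equation}
using $\mathcal{L}_{22}\vect{1} = -\mathcal{L}_{11} \geq 0$ (which follows from $\mathcal{L}\vect{1} = \vect{0}$). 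Because the diagonal entries are positive, the off-diagonals are nonpositive, and every row sum is strictly positive, $\mat{M}$ is strictly diagonally dominant. Gershgorin's theorem then forces all eigenvalues of $\mat{M}$ into the open right half-plane, and since $\mat{M}$ is symmetric they are real and positive, giving $\mat{M} > 0$ as required.

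The main obstacle is the M-matrix step: confirming that the directed-spanning-tree hypothesis really delivers a nonsingular $\mathcal{L}_{22}$ with nonnegative inverse. Once this classical fact is invoked, the rest is a short, essentially algebraic computation built around the diagonal-dominance argument.
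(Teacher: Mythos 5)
Your proof is correct. Note that the paper does not actually prove this lemma --- it is imported verbatim from \cite{zhang2015graph_lyapunov} --- so there is no in-paper argument to compare against; your write-up is essentially the standard proof found in that literature. The chain of reasoning is sound at every step: the zero row of $\mathcal{L}$ follows from the no-incoming-edges assumption and the Laplacian convention $l_{ij}=-a_{ij}$; the block-triangular structure places all nonzero eigenvalues of $\mathcal{L}$ (which have positive real part by the simple-zero-eigenvalue property of spanning-tree Laplacians) into $\sigma(\mathcal{L}_{22})$, so $\mathcal{L}_{22}$ is a nonsingular M-matrix with $\mathcal{L}_{22}^{-1}\geq 0$; the choice $\vect{\xi}=(\mathcal{L}_{22}^{-\top})\vect{1}$ is strictly positive because a nonsingular nonnegative matrix has no zero row; and the computation $\xi_i a_{i0}+1>0$ for the row sums of $\mat{\Gamma}\mathcal{L}_{22}+\mathcal{L}_{22}^\top\mat{\Gamma}$, combined with nonpositive off-diagonals, gives strict diagonal dominance and hence positive definiteness of the symmetric matrix. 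No gaps.
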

For future use, denote the $i^{th}$ diagonal element of $\mat{\Gamma}$ as $\gamma_i$ and define $\bar\gamma \triangleq \max_i \gamma_i$ and $\underline{\gamma} \triangleq \min_i \gamma_i$.

\subsection{Euler-Lagrange Systems}\label{subsec:el_system_background}
The $i^{th}$ Euler-Lagrange agent's equation of motion is:
\begin{equation}\label{eq:euler_lagrange_general_agent}
\mat{M}_i(\vect{q}_i)\ddot{\vect{q}}_i + \mat{C}_i(\vect{q}_i,\dot{\vect{q}}_i)\dot{\vect{q}}_i + \vect{g}_i(\vect{q}_i) + \vect{\zeta}_i = \vect{\tau}_i 
\end{equation}
where $\vect{q}_i(t)\in \mathbb{R}^p$ is a vector of the generalised coordinates. Note that from here onwards, we drop the time argument $t$ whenever there is no ambiguity. The inertia matrix is $\mat{M}_i(\vect{q}_i) \in \mathbb{R}^{p\times p}$, $\mat{C}_i(\vect{q}_i,\dot{\vect{q}}_i)\in\mathbb{R}^{p\times p}$ is the Coriolis and centrifugal force matrix, $\vect{g}_i\in\mathbb{R}^{p}$ is the vector of (gravitational) potential forces and $\vect{\zeta}_i(t)$ is an unknown, time-varying disturbance. It is assumed that all agents are fully-actuated, with $\vect{\tau}_i\in\mathbb{R}^p$ being the control input vector. For each agent, the $k^{th}$ generalised coordinate is denoted using superscript $(k)$; thus $\vect{q}_i = [q^{(1)}_i, ..., q^{(p)}_i]^\top$. It is assumed that the systems described using \eqref{eq:euler_lagrange_general_agent} have the following properties given below: 
\begin{enumerate}[label=P\arabic*]
\item \label{prty:P1_M_PD} The matrix $\mat{M}_i(\vect{q}_i)$ is symmetric positive definite. 
\item \label{prty:P2_M_bound} There exist scalar constants $k_{\underline{m}}, k_{\overline{M}} > 0$ such that $k_{\underline{m}}\mat{I}_p\leq \mat{M}_i(\vect{q}_i)\leq k_{\overline{M}}\mat{I}_p, \forall\,i,\vect q_i$. It follows that $\sup_{\vect{q}_i}\Vert\mat{M}_i\Vert_2 \leq k_{\overline{M}}$ and $k_{\underline{m}}\leq \inf_{\vect{q}_i}\Vert\mat{M}_i^{-1}\Vert_2$ holds $\forall\,i$.
\item \label{prty:P3_skew} The matrix $\mat{C}_i(\vect{q}_i,\dot{\vect{q}}_i)$ is defined such that $\dot{\mat{M}}_i-2\mat{C}_i$ is skew-symmetric, i.e. $\dot{\mat{M}}_i = \mat{C}_i+\mat{C}_i^\top$. 
\item \label{prty:P4_Cg_bound} There exist scalar constants $k_{C}, k_g > 0$ such that \mbox{$\Vert\mat{C}_i\Vert_2 \leq k_{C} \Vert\dot{\vect{q}}_i\Vert_2,\forall\,i,\dot{\vect q}_i$} and \mbox{ $\Vert\vect g_i\Vert_2 < k_{g},\forall i$}.
\item \label{prty:P5_dist_bound} There exists a constant $k_{\zeta}$ such that \mbox{$\Vert \vect{\zeta}_i \Vert_2 \leq k_{\zeta}, \forall\, i$.}
\end{enumerate}
Properties \ref{prty:P1_M_PD}-\ref{prty:P4_Cg_bound} are standard and widely assumed properties of Euler-Lagrange dynamical systems, see \cite{ortega2013passivity_book} for details. Property \ref{prty:P5_dist_bound} is a reasonable assumption on disturbances.

\subsection{Problem Statement}
The leader is denoted as agent 0, i.e. vertex $v_0$, with $\vect{q}_0$(t) and $\dot{\vect{q}}_0(t)$ being its time-varying generalised coordinates and generalised velocity, respectively. The objective is to develop a model-independent, distributed algorithm which allows a directed network of Euler-Lagrange agents to synchronise and track the trajectory of the leader. The leader tracking objective is said to be achieved if $\lim_{t\to \infty}\Vert\vect{q}_i(t)-\vect{q}_0(t)\Vert_2 = 0$ and $\lim_{t\to\infty}\Vert\dot{\vect{q}}_i(t) -\dot{\vect{q}}_0(t)\Vert_2 = 0$ for all $i = 1, \hdots, n$. By model-independent, we mean that the algorithm does not contain $\mat{M}_i, \mat{C}_i, \vect g_i\,\forall\, i$ nor make use of an associated linear parametrisation.
Two mild assumptions are now given.
\begin{assumption}\label{assm:leader_traj} The leader trajectory $\vect{q}_0(t)$ is a $\mathcal{C}^2$ function with derivatives $\dot{\vect{q}}_0$ and $\ddot{\vect{q}}_0$ which are bounded as \mbox{$\Vert\vect{1}_n\otimes \dot{\vect{q}}_0\Vert_2 \leq k_p$} and $\Vert\vect{1}_n\otimes \ddot{\vect{q}}_0\Vert_2 \leq k_q$. The positive constants $k_p, k_q$ are known a priori.
\end{assumption}

\begin{assumption}\label{assm:IC}
All possible initial conditions lie in some fixed but arbitrarily large set that is known. In particular, $\Vert \vect{q}_i \Vert_2 \leq k_a/\sqrt{n}$ and $\Vert \dot{\vect{q}}_i \Vert_2 \leq k_b/\sqrt{n}$, where $k_a, k_b$ are known a priori. 
\end{assumption}
These two assumptions are not unreasonable, as many systems will have an expected operating range for $\vect{q}$ and $\dot{\vect{q}}$.

The follower agents' capability to \emph{sense} relative states is captured by the directed graph $\mathcal{G}_A$ with an associated Laplacian $\mathcal{L}_A$. In Section~\ref{sec:main}, we assume $\mathcal{G}_A$ is fixed. Later in Section~\ref{sec:switch_top}, it is assumed that $\mathcal{G}_A$ is dynamic, i.e. time-varying. 
Thus, if $a_{ij}>0$ then agent $i$ 
can sense $\vect{q}_i-\vect{q}_j$ and $\dot{\vect{q}}_i-\dot{\vect{q}}_j$. We denote the neighbour set of agent $i$ on $\mathcal{G}_A$ as $\mathcal{N}_{Ai}$. We further assume that agent $i$ can measure its own $\vect{q}_i$ and $\dot{\vect q}_i$. A second weighted and directed time-varying graph $\mathcal{G}_B(t)$, with the associated Laplacian $\mathcal{L}_B(t)$, exists between the followers to \emph{communicate} estimates of the leader's state. Denote the neighbour set of agent $i$ on $\mathcal{G}_B(t)$ at time $t$ as $\mathcal{N}_{Bi}(t)$. Note that $v_j\in\mathcal{N}_{Bi}(t)$ when agent $j$ communicates  directly to agent $i$ its estimates of the leader's state at time $t$ (the precise nature of this estimate is described in Section~\ref{subsec:distributed_observer}). Further note that $\mathcal{G}_A$ is not necessarily equal to $\mathcal{G}_B$ and so $\mathcal{N}_{Ai} \neq\mathcal{N}_{Bi}$ in general. However the node sets of $\mathcal{G}_A$ and $\mathcal{G}_B$ are the same.


\begin{remark}[Comparison of this paper to recent leader tracking results]
Almost all mechanical systems will have trajectories which satisfy the mild Assumption~\ref{assm:leader_traj}. In comparison, more restrictive assumption are made on the leader trajectory in \cite{abdessameud2016TAC_eltrack,feng2016EL_modelfree_tracking}. In   \cite{abdessameud2016TAC_eltrack,feng2016EL_modelfree_tracking}, the leader trajectory is describable by an LTI system, with system matrix defined as $\mat{S}$. In \cite{abdessameud2016TAC_eltrack}, it is assumed that all eigenvalues of $\mat{S}$ are purely imaginary. In \cite{feng2016EL_modelfree_tracking}, it is assumed that $\mat{S}$ is marginally stable. More importantly, both \cite{abdessameud2016TAC_eltrack} and \cite{feng2016EL_modelfree_tracking} assume that $\mat{S}$ is \textbf{known to all agents}, which is a highly restrictive assumption. As will become apparent in the sequel, we use a distributed observer to allow every agent to obtain $\vect{q}_0(t)$ and $\dot{\vect{q}}_0(t)$ precisely. The work \cite{ghapani2016el_flocking_leader} has similar assumptions to this paper, but uses an adaptive algorithm and is therefore fundamentally different to the model-independent controller studied in this paper.
\end{remark}

\section{Leader Tracking on Fixed Directed Networks}\label{sec:main}

\subsection{Finite-Time Distributed observer}\label{subsec:distributed_observer}
Before we show the main result, we detail a distributed finite-time observer developed in \cite{cao2010SL_estimators} which allows each follower agent to obtain $\vect q_0$ and $\dot{\vect q}_0$. Let $\wh{\vect r}_i$ and $\wh{\vect v}_i$ be the $i^{th}$ agent's estimated values for the leader position and velocity respectively. Agent $i\in \{1, \hdots, n\}$ runs the observer
\begin{subequations}\label{eq:dist_observer}
\begin{align}
\dot{\wh{\vect r}}_i & = \wh{\vect v}_i - \omega_1 \sgn\big(\sum_{j\in\mathcal{N}_{Bi}(t)} b_{ij}(t) (\wh{\vect r}_i - \wh{\vect r}_j)\big) \label{eq:observer_pos} \\
\dot{\wh{\vect v}}_i & = -\omega_2 \sgn\big(\sum_{j\in\mathcal{N}_{Bi}(t)} b_{ij}(t) (\wh{\vect v}_i - \wh{\vect v}_j)\big) \label{eq:observer_vel}
\end{align}
\end{subequations}
where $b_{ij}$ are the elements of the adjacency matrix associated with graph $\mathcal{G}_B(t)$ and $\omega_1, \omega_2 > 0$ are internal gains of the observer. Clearly, if $a_{i0} > 0$ then \mbox{agent $i$} can directly sense the leader, $v_0$ and thus learns of $\vect{q}_0$ and $\dot{\vect q}_0$. For such an \mbox{agent $i$}, we set $b_{i0} > 0$ and $\wh{\vect r}_0(t) = \vect q_0(t)$ and $\wh{\vect v}_0(t) = \dot{\vect q}_0(t)$; agent $i$ still runs the distributed observer \eqref{eq:dist_observer}. We now give a theorem for convergence of the observer, and explain below why all followers execute \eqref{eq:dist_observer} even if they learn of $\vect{q}_0$, $\dot{\vect q}_0$ from $\mathcal{G}_A$. 
\begin{theorem}[Theorem 4.1 of \cite{cao2010SL_estimators}]\label{thm:dist_observer_cao}
Suppose that the leader trajectory $\vect{q}_0(t)$ satisfies Assumption~\ref{assm:leader_traj}. If at every $t$, $\mathcal{G}_B(t)$ contains a directed spanning tree, and $\omega_2 > k_q/n$ then, for some $T_1 < \infty$, there holds $\wh{\vect r}_i(t) = \vect q_0(t)$ and $\wh{\vect v}_i(t) = \dot{\vect q}_0(t)$ for all $i\in \{1, \hdots, n\}$, for all $t\geq T_1$.
\end{theorem}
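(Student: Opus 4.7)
The plan is to decouple the proof into two finite-time phases: first drive every $\wh{\vect v}_i$ to $\dot{\vect q}_0$ in finite time, then exploit that fact to drive every $\wh{\vect r}_i$ to $\vect q_0$ in additional finite time. Since the right-hand side of \eqref{eq:dist_observer} is discontinuous, everything will be carried out in the sense of Filippov solutions, and derivatives of Lyapunov functions interpreted as set-valued Clarke derivatives.

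For the velocity phase I would set $\wt{\vect v}_i = \wh{\vect v}_i - \dot{\vect q}_0$ with the convention $\wt{\vect v}_0 = \vect 0$. Because $\sgn(\cdot)$ acts componentwise and the scalar coordinates of $\vect q_0$ do not couple through the dynamics, it suffices to work with one scalar component at a time, so I can effectively treat $\wt v_i$ as a scalar. Using Lemma~\ref{cor:dst_laplacian_pd} applied to each instantaneous graph $\mathcal{G}_B(t)$ (which by hypothesis contains a directed spanning tree rooted at $v_0$), there exist diagonal $\mat \Gamma(t) > 0$ whose entries are uniformly bounded above and below over the switches, so the natural Lyapunov candidate is a weighted norm such as
\begin{equation}
V_v(\wt{\vect v}) \;=\; \max_{i\in\{1,\ldots,n\}}\, \gamma_i\,|\wt v_i|,
\end{equation}
or equivalently a weighted $\ell_1$-type function, which are both standard for sign-based protocols on digraphs containing a spanning tree. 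The key computation is to pick out the index $i^\ast$ attaining the maximum, note that along a Filippov solution the corresponding sign argument has the same sign as $\wt v_{i^\ast}$ (this uses the spanning-tree structure and the fact that neighbours of $i^\ast$ have smaller $|\wt v_\cdot|$), and conclude
\begin{equation}
D^+ V_v \;\le\; -\omega_2 \,\underline{\gamma} \;+\; \bar\gamma\, k_q/n.
\end{equation}
The hypothesis $\omega_2 > k_q/n$, together with the uniform bounds on $\mat\Gamma$, makes this right-hand side a strictly negative constant off the equilibrium, giving finite-time convergence to $\wt{\vect v}\equiv \vect 0$ by the Zubov/Bhat–Bernstein finite-time theorem. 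Call this finite settling time $T_0$.

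For the position phase, on the interval $t\ge T_0$ the relation $\wh{\vect v}_i = \dot{\vect q}_0$ reduces \eqref{eq:observer_pos} to
\begin{equation}
\dot{\wt{\vect r}}_i \;=\; -\omega_1 \sgn\Bigl(\sum_{j\in\mathcal N_{Bi}(t)} b_{ij}(t)\,(\wt{\vect r}_i-\wt{\vect r}_j)\Bigr),
\end{equation}
with $\wt{\vect r}_i = \wh{\vect r}_i - \vect q_0$ and $\wt{\vect r}_0 = \vect 0$. This is a pure sign-based consensus protocol to the root, structurally identical to the velocity problem but now with forcing term $0$ instead of $-\ddot{\vect q}_0$, so the same weighted-max Lyapunov argument yields $\dot V_r \le -\omega_1\underline\gamma$ for any $\omega_1 > 0$ and therefore finite-time convergence of $\wt{\vect r}$ to zero at some time $T_1 > T_0$.

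The main obstacle is the rigorous non-smooth analysis under a switching digraph: the Lyapunov function must have a generalized gradient that is insensitive to which particular spanning tree $\mathcal G_B(t)$ currently contains, and one must verify that at switching instants no jump occurs in $V_v$ or $V_r$ (trivially true here since the states are continuous and $V$ is a continuous function of the state). The weighted-max construction handles this uniformly provided $\underline\gamma$ and $\bar\gamma$ are taken over the finite family of possible graphs, which is the technical content behind the cited Theorem 4.1 of \cite{cao2010SL_estimators}.
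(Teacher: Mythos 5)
First, a point of reference: the paper does not prove this statement --- it is imported verbatim as Theorem~4.1 of \cite{cao2010SL_estimators} and used as a black box, so there is no in-paper proof to compare against. Your cascade skeleton is nonetheless the right one and matches the cited work: the $\wh{\vect v}_i$-dynamics \eqref{eq:observer_vel} are autonomous, so one first proves finite-time convergence of the velocity estimates, after which \eqref{eq:observer_pos} reduces to a drift-free sign-consensus pinned at $\wt{\vect r}_0 = \vect 0$ and the same machinery applies with $\omega_1>0$ arbitrary. The componentwise scalar reduction and the insistence on Filippov solutions are also appropriate.

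The genuine gap is in the central Lyapunov computation, i.e.\ the claimed inequality $D^+V_v \le -\omega_2\underline\gamma + \bar\gamma k_q/n$. Two concrete failures. (a) With the weighted max $V_v=\max_i \gamma_i|\wt v_i|$, the maximizer $i^\ast$ satisfies $\gamma_{i^\ast}|\wt v_{i^\ast}|\ge \gamma_j|\wt v_j|$, which does \emph{not} imply $|\wt v_j|\le |\wt v_{i^\ast}|$ unless $\gamma_j\ge\gamma_{i^\ast}$; so the parenthetical claim that neighbours of $i^\ast$ have smaller magnitude, hence that the $\sgn$ argument at $i^\ast$ has the sign of $\wt v_{i^\ast}$, is unjustified. (b) Even with uniform weights, if every in-neighbour of $i^\ast$ on the current graph is tied with it at the maximum, the argument of $\sgn$ is zero, the Filippov set contains $0$, and the residual drift $-\ddot{\vect q}_0$ can then \emph{increase} the max; the strict decrease must instead be extracted from the directed path back to the root $v_0$ (whose error is pinned at zero), which is exactly the step you assert but do not carry out. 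Two further problems compound this: the sign condition $-\omega_2\underline\gamma+\bar\gamma k_q/n<0$ requires $\omega_2>(\bar\gamma/\underline\gamma)\,k_q/n$, which is strictly stronger than the stated hypothesis $\omega_2>k_q/n$ whenever the weights are non-uniform; and since $\mat\Gamma$ is graph-dependent, $V_v$ changes definition at each switch of $\mathcal{G}_B(t)$, so it is \emph{not} continuous along trajectories as you claim --- one needs a common Lyapunov function across the family of graphs. The standard repair, consistent with how this paper itself handles directedness in Theorem~\ref{theorem:main_result}, is to use the quadratic form $\tfrac12\,\wt{\vect v}^\top(\mat\Gamma\otimes\mat I_p)\wt{\vect v}$ together with $\mat\Gamma\mathcal{L}_{22}+\mathcal{L}_{22}^\top\mat\Gamma>0$ from Lemma~\ref{cor:dst_laplacian_pd}, bounding the signed term by an $l_1$-norm and the drift by an $l_2$-norm; this also explains where the factor of $n$ in the threshold $\omega_2>k_q/n$ comes from, which your estimate cannot reproduce. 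As stated, the proof does not go through.
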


The key reason for agent $i$ to run the distributed observer even if $a_{i0} > 0$ (and thus agent $i$ knows $\vect{q}_0$ and $\dot{\vect q}_0$) is to ensure robustness to network changes over time (e.g. switching topology due to loss of connection). We elaborate further. In the case of a fixed $\mathcal{G}_A$, then agent $i$ will know $\vect{q}_0$ and $\dot{\vect q}_0$ for all $t$ and there will be no need for the observer. However, we explore switching $\mathcal{G}_{A}(t)$ in Section~\ref{sec:switch_top}. Consider the case where $a_{i0}(t) = 1$ for $t\in [0,10)$ and $a_{i0}(t) = 0$ for $t\in [10,\infty)$. If agent $i$ does not run \eqref{eq:dist_observer}, then for $t\geq 10$, it would not know $\vect{q}_0(t)$ and $\dot{\vect q}_0(t)$ because $a_{i0}(t) = 0\Rightarrow b_{i0}(t) = 0$. If agent $i$ runs \eqref{eq:dist_observer} from $t=0$, then for all $t \geq T_1$, it is guaranteed that $\wh{\vect r}_1(t) = \vect q_0(t)$ and $\wh{\vect v}_1(t) = \dot{\vect q}_0(t)$ even if $\mathcal{G}_A(t)$ switches, so long as the connectivity condition in Theorem~\ref{thm:dist_observer_cao} is satisfied. A second reason is that agent $i$ may acquire states by sensing over $\mathcal{G}_A$; \eqref{eq:dist_observer} acts as a filter for noisy measurements.




\subsection{Model-Independent Control Law}

Consider the following  algorithm for the $i^{th}$ agent
\begin{align}\label{eq:claw_track_sign}
\tau_i & = - \eta\sum_{j\in\mathcal{N}_{Ai}} a_{ij}\Big( (\vect{q}_i-\vect{q}_j) +\mu(\dot{\vect{q}}_i-\dot{\vect{q}}_j) \Big) \nonumber \\
& \qquad - \beta\sgn\left((\vect q_i- \wh{\vect r}_i)+\mu(\dot{\vect q}_i- \wh{\vect v}_i)\right)
\end{align}
where $a_{ij}$ is the weighted $(i,j)$ entry of the adjacency matrix $\mathcal{A}$ associated with the weighted directed graph $\mathcal{G}_A$. The control gains $\mu, \eta$ and $\beta$ are strictly positive constants and their design will be specified later. For simplicity, it is assumed that $\eta > 1$. Note that for all $i$, for all $t > T_1$, $\wh{\vect r}_i$ is replaced with $\vect q_0$ and $\wh{\vect v}_i$ replaced with $\dot{\vect q}_0$. 

Let us denote the new error variable $\widetilde{\vect{q}}_i = \vect{q}_i - \vect{q}_0$. Let $\widetilde{\vect{q}} = [\widetilde{\vect{q}}_1^\top, ..., \wt{\vect{q}}_n^\top]^\top \in \mathbb{R}^{np\times 1}$ the stacked column vector of all $\wt{\vect{q}}_i$. The leader tracking objective is therefore achieved if $\widetilde{\vect{q}}(t) = \dot{\widetilde{\vect{q}}}(t) = 0$ as  $t\to\infty$. We denote $\vect{g} = [\vect{g}_1^\top, ..., \vect{g}_n^\top]^\top$, $\vect{\zeta} = [\vect{\zeta}_1^\top, ..., \vect{\zeta}_n^\top]^\top$, $\vect{q} = [\vect q _1^\top, ..., \vect q _n^\top]^\top$, and $\dot{\vect{q}} = [\dot{\vect q} _1^\top, ..., \dot{\vect q} _n^\top]^\top$ as the $np\times 1$ stacked column vectors of all $\vect{g}_i, \vect{\zeta}_i, \vect{q}_i$ and $\dot{\vect q}_i$ respectively. Let $\mat{M}(\vect{q}) = diag[\mat{M}_1(\vect{q}_1), ..., \mat{M}_n(\vect{q}_n)] \in \mathbb{R}^{np\times np}$, and $\mat{C}(\vect{q},\dot{\vect{q}}) = diag[\mat{C}_1(\vect{q}_1, \dot{\vect{q}}_1), ..., \mat{C}_n(\vect{q}_n, \dot{\vect{q}}_n)] \in \mathbb{R}^{np\times np}$.  Since $\mat{M}_i > 0,\,\forall\,i$ then $\mat{M}$ is also symmetric positive definite. Define an error vector, $\vect e_i = \wh{\vect r}_i - \vect q_0, \forall i = 1,...,n$ and $\dot{\vect e}_i = \wh{\vect v}_i - \dot{\vect q}_0$.  Define $\vect e = [\vect e_1^\top, ..., \vect e_n^\top]^\top \in \mathbb{R}^{np\times 1}$, $\dot{\vect e} = [\dot{\vect e}_1^\top, ..., \dot{\vect e}_n^\top]^\top \in \mathbb{R}^{np\times 1}$. 

The definition of $\wt{\vect{q}}_i$ yields $\mat{M}_i\ddot{\widetilde{\vect{q}}}_i = \mat{M}_i\ddot{\vect{q}}_i - \mat{M}_i\ddot{\vect{q}}_0$ and combining the agent dynamics \eqref{eq:euler_lagrange_general_agent} and the control law \eqref{eq:claw_track_sign}, the closed-loop system for the follower network, with nodes $v_1, \hdots, v_n$, can be expressed as 
\begin{align}\label{eq:network_system}
&\ddot{\wt{\vect{q}}} \in ^{a.e.} \mathcal{K} \big[ -\mat{M}^{-1}[ \mat{C}\dot{\wt{\vect{q}}} + \eta(\mathcal{L}_{22}\otimes\mat{I}_p)(\wt{\vect{q}}+ \mu\dot{\wt{\vect{q}}}) + \vect{g} + \vect{\zeta}\nonumber\\
&  +\beta\sgn\left(\vect s +\mu\dot{\vect s}\right) + \mat{M}(\vect{1}_n\otimes\ddot{\vect{q}}_0) + \mat{C}(\vect{1}_n\otimes\dot{\vect{q}}_0)]\big]
\end{align} 
where $\mathcal{K}$ denotes the differential inclusion, $a.e.$ stands for ``almost everywhere" and $\vect s = \wt{\vect q} - \vect e$. Here, $\mathcal{L}_{22}$ is the lower block matrix of $\mathcal{L}_A$ as partitioned in \eqref{eq:laplacian_partition}. Filippov solutions of $\wt{\vect q}$ and $\dot{\wt{\vect q}}$ for \eqref{eq:network_system} exist because the signum function is measurable and locally essentially bounded, and $\wt{\vect q}$ and $\dot{\wt{\vect q}}$ are absolutely continuous functions of time \cite{cortes2008discontinuous_tutorial}.


\subsection{An Upper Bound Using Initial Conditions}\label{subsec:IC_bound}
Before proceeding with the main proof, we calculate an upper bound (which may not be tight) on the initial states expressed as $\Vert \wt{\vect q}(0) \Vert_2 < \mathcal{X}$ and $\Vert \dot{\wt{\vect q}}(0) \Vert_2 < \mathcal{Y}$ using Assumption~\ref{assm:IC}. In the sequel, we show that these bounds hold for all time, and exponential convergence results.
In keeping with the model-independent approach, define 
\begin{equation}\label{eq:V_bar}
\bar V_\mu = \begin{bmatrix}\wt{\vect{q}} \\ \dot{\wt{\vect{q}}}\end{bmatrix}^\top
\begin{bmatrix}
\eta\lambda_{\max}(\mat{X})\mat{I}_{np} & \frac{1}{2}\mu^{-1} (k_{\overline{M}}+\delta)\mat{I}_{np} \\ \frac{1}{2}\mu^{-1} (k_{\overline{M}}+\delta)\mat{I}_{np} & \frac{1}{2}(k_{\overline{M}}+\delta) \mat{I}_{np}
\end{bmatrix} 
\begin{bmatrix}\wt{\vect{q}} \\ \dot{\wt{\vect{q}}}\end{bmatrix}
\end{equation}
where\footnote{Note that $\bar V_\mu$ is not a Lyapunov function}, $\mat{X} = (\mat{\Gamma}\mathcal{L}_{22}+\mathcal{L}_{22}^\top\mat{\Gamma})\otimes\mat{I}_p > 0$ from Lemma~\ref{cor:dst_laplacian_pd} and the constant $\delta > 0$ is sufficiently small such that $k_{\underline{m}}-\delta > 0$. Without loss of generality, assume that $\mat{\Gamma}$ is scaled such that $\bar\gamma = 1$. Let the matrix in \eqref{eq:V_bar} be $\mat L_\mu$. By observing that \mbox{$(k_{\overline{M}}+\delta)\mat{I}_{np} > \mat M$}, then according to Lemma~\ref{theorem:schur_complement}, $\mat{L}_\mu > 0$ if and only if $\eta\lambda_{\max}(\mat{X})\mat{I}_{np} - \frac{1}{2}\mu^{-2} (k_{\overline{M}} + \delta)\mat{I}_{np} > 0$. It follows that $\mat{L}_\mu > 0$ for any \mbox{$\mu\geq \mu_1^*$ where $\mu_1^* > \sqrt{(k_{\overline{M}} + \delta)/2\lambda_{\max}(\mat{X})}$.} Since $\mat{X} > 0$, such a $\mu_1^*$ always exists. For convenience, we use $\bar V_\mu(t)$ to denote $\bar V_\mu (\wt{\vect q}(t), \dot{\wt{\vect q}}(t))$, and observe that there holds
\begin{align}\label{eq:Vbar_upper}
\bar{V}_\mu(t) & \!\leq\! \eta\lambda_{\max}(\mat{X}) {\Vert \wt{\vect{q}} \Vert_2^2} \!+\! (k_{\overline{M}} \!+\! \delta)\big(\frac{1}{2} {\Vert \dot{\wt{\vect q}} \Vert_2^2} \!+\! \mu^{-1} \Vert \wt{\vect{q}} \Vert_2 \Vert \dot{\wt{\vect q}} \Vert_2 \big)
\end{align}
for all $t$. Next, define
\begin{equation}\label{eq:V_LB}
\underline{V}_{\mu} \!=\! \frac{1}{2}\begin{bmatrix}\wt{\vect{q}} \\ \dot{\wt{\vect{q}}}\end{bmatrix}^\top\!\!
\begin{bmatrix}
\frac{1}{2}\eta\lambda_{\min}(\mat{X})\mat{I}_{np} & \hspace*{-9pt} \mu^{-1} \underline{\gamma} (k_{\underline{m}}\!-\!\delta) \mat{I}_{np} \\ \mu^{-1} \underline{\gamma} (k_{\underline{m}}\!-\!\delta) \mat{I}_{np} & \hspace*{-9pt} \underline{\gamma}(k_{\underline{m}}\!-\!\delta) \mat{I}_{np}
\end{bmatrix} \!\!
\begin{bmatrix}\wt{\vect{q}} \\ \dot{\wt{\vect{q}}}\end{bmatrix}
\end{equation}
Call the matrix in \eqref{eq:V_LB} $\mat N_\mu$. Similarly to above, use Lemma~\ref{theorem:schur_complement} to show that $\mat N_\mu > 0$ for any $\mu \geq \mu_2^*$ where $\mu_2^* > \sqrt{2\underline{\gamma}(k_{\underline{m}} - \delta)/\lambda_{\min}(\mat{X})}$. Set $\mu^*_3 = \max\{\mu^*_1, \mu^*_2\}$. Define 
\begin{subequations}
\begin{align}
\rho_1(\mu) & = \eta\lambda_{\max}(\mat X) - \frac{1}{2}\mu^{-2}({k_{\overline{M}}} + \delta) \\
\rho_2(\mu) & = \eta\frac{1}{4} \lambda_{\min}(\mat X)  - \frac{1}{2} \mu^{-2}\underline{\gamma}(k_{\underline{m}}-\delta)
\end{align} 
\end{subequations}
and verify that $\rho_1(\mu_3^*) > \rho_2(\mu_3^*)$. Note that for any $\mu\geq \mu^*_3$ there holds $\bar V_\mu \leq \bar V_{\mu^*_3}$ and $\rho_i(\mu^*_3) \leq \rho_i(\mu), i = 1,2$. Compute
\begin{align*}\label{eq:Vbar_star_upper}
\bar{V}^* = \eta\lambda_{\max}(\mat{X}) k_a^2 + \frac{1}{2}(k_{\overline{M}} + \delta) k_b^2 + {\mu_3^*}^{-1}(k_{\overline{M}} + \delta ) k_a k_b
\end{align*}
 From Assumption~\ref{assm:IC}, one has that ${\Vert \wt{\vect q}(0)\Vert_2} \leq k_a$ and ${\Vert \dot{\wt{\vect q}}(0)\Vert_2} \leq k_b$. Thus, one concludes from \eqref{eq:Vbar_upper} and the above equation that there holds $\bar{V}_\mu(0) \geq \bar{V}^*$ for any $\mu \geq \mu_3^*$. Because we assumed $\eta > 1$, it follows from Lemma~\ref{lem:W_bound} and \eqref{eq:W_bound_x} that 
\begin{equation}\label{eq:initial_X1}
\Vert \wt{\vect q}(0)\Vert_2 \leq \sqrt{\frac{\bar V_{\mu}(0)}{\rho_1(\mu)}} \leq \sqrt{\frac{\bar V_{\mu}(0)}{\rho_1(\mu^*_3)}} < \sqrt{\frac{\bar V^*(0)}{\rho_2(\mu^*_3)}}  \triangleq \mathcal{X}_1
\end{equation}  
Following a similar method yields $\mathcal{Y}_1$. Next, compute
\begin{equation*}
\wh{V}^* = \eta\lambda_{\max}(\mat X){\mathcal{X}_1}^2 + \frac{1}{2} (k_{\overline{M}} +\delta){\mathcal{Y}_1}^2 + (\mu^*_3)^{-1}(k_{\overline{M}} +\delta)\mathcal{X}_1\mathcal{Y}_1
\end{equation*}
and observe that $\bar V^* \leq \wh{V}^*$. Lastly, compute the bound
\begin{equation}\label{eq:X_bound}
\mathcal{X} = \sqrt{\wh{V}^*/\rho_2(\mu^*_3)} 
\end{equation}
and notice that $\Vert \wt{\vect q}(0)\Vert_2 \leq \mathcal{X}_1 \leq \mathcal{X}$. Similarly, $\mathcal{Y}$ is obtained using \eqref{eq:W_bound_y}, with the steps omitted due to spatial limitations. Because both sides of \eqref{eq:X_bound} are independent of $\mu$, the values $\mathcal{Y}$ and $\mathcal{X}$ do not change for all $\mu \geq \mu^*_3$.

\subsection{Stability Proof}

\begin{theorem}\label{theorem:main_result}
Suppose that the conditions in Theorem~\ref{thm:dist_observer_cao} are satisfied. Under Assumptions~\ref{assm:leader_traj} and \ref{assm:IC}, the leader-tracking is achieved exponentially fast if 1) the network $\mathcal{G}_A$ contains a directed spanning tree with the leader as the root node, and 2) the control gains $\mu, \eta, \beta$ satisfy a set of lower bounding inequalities\footnote{In Remark~\ref{rem:gain_design}, we detail an approach for designing the gains.}. For a given $\mathcal{G}_A$ containing a directed spanning tree, there always exists $\mu, \eta, \beta$ which satisfy the inequalities.
\begin{proof}
The proof will be presented in four parts. In \emph{Part 1}, we study a Lyapunov-like candidate function $V$. In \emph{Part 2}, we analyse $\dot{V}$ and show that it is upper bounded. \emph{Part 3} shows that the system trajectory remains bounded for all time, and exponential convergence is proved in \emph{Part 4}.

\emph{Part 1:} Consider the Lyapunov-like candidate function
\begin{equation*}\label{eq:V_01}
V = \frac{1}{2}\eta\wt{\vect{q}}^\top\mat{X}\wt{\vect{q}} + \mu^{-1}\wt{\vect{q}}^\top\mat{\Gamma}_p\mat{M}\dot{\wt{\vect{q}}} + \frac{1}{2}\dot{\wt{\vect{q}}}^\top\mat{\Gamma}_p \mat{M}\dot{\wt{\vect{q}}} = V_1 + V_2 + V_3
\end{equation*}
with $\mat{X}$ given below \eqref{eq:V_bar}, and $\mat{\Gamma}_p = \mat{\Gamma} \otimes \mat{I}_p$. Observe that
\begin{equation}\label{eq:V_01_quad}
V = \begin{bmatrix}\wt{\vect{q}} \\ \dot{\wt{\vect{q}}}\end{bmatrix}^\top
\begin{bmatrix}
\frac{1}{2}\eta\mat{X} & \frac{1}{2}\mu^{-1}\mat{\Gamma}_p\mat{M} \\ \frac{1}{2}\mu^{-1}\mat{\Gamma}_p\mat{M} & \frac{1}{2}\mat{\Gamma}_p\mat{M}
\end{bmatrix} 
\begin{bmatrix}\wt{\vect{q}} \\ \dot{\wt{\vect{q}}}\end{bmatrix}
\end{equation}
Call the matrix in \eqref{eq:V_01_quad} $\mat H_\mu$. 
From Lemma~\ref{theorem:schur_complement}, and the assumed properties of $\mat{M}_i$, there holds $\mat{H}_\mu > 0$ if and only if $\eta\mat{X} - \mu^{-2}\mat{\Gamma}_p\mat{M} > 0$, which is implied by $\lambda_{\min}(\mat{X}) - \mu^{-2}k_{\overline{M}} > 0$. This is because $k_{\overline{M}} \geq \sup_{\vect q} \lambda_{\max}(\mat{M})$, and we assumed that $\eta > 1$ and $\bar\gamma = 1$. For any $\mu \geq \mu_4^*$, where
\begin{equation}\label{eq:cond_V_PD}
\mu_4^* > \sqrt{2k_{\overline{M}}/\lambda_{\min}\left(\mat{X}\right)} 
\end{equation}
there holds $\mat L_\mu> \mat H_\mu > \mat{N}_\mu > 0$ because $\mu_4^* \geq \mu_3^*$ as defined below \eqref{eq:V_LB}. Thus, although the eigenvalues $\lambda_i(\mat{H}_\mu)$ depend on $\vect{q}(t)$, there holds $\lambda_{\min}(\mat{N}_\mu) \leq \lambda_i(\mat{H}_\mu) \leq \lambda_{\max}(\mat{L}_\mu)$ for all $i$, and for all $t\geq 0$. Thus, for any $\mu \geq \mu_4^*$, $V > 0$ and is radially unbounded.
For simplicity, let $V(t)$ denote $V(\wt{\vect q}(t), \dot{\wt{\vect q}}(t))$ and observe that $V(t) < \bar V_{\mu}(t),\forall\, t$ because
\begin{align}\label{eq:V_upper_bound}
V(t) & \leq \frac{1}{2}\eta\lambda_{\max}(\mat X) {\Vert \wt{\vect q}(t)\Vert_2}^2 + \frac{1}{2}k_{\overline{M}}{\Vert \dot{\wt{\vect q}}(t)\Vert_2}^2 \nonumber \\
& \quad + \mu^{-1} k_{\overline{M}}\Vert\wt{\vect q}(t)\Vert_2 \Vert \dot{\wt{\vect q}}(t)\Vert_2
\end{align}

\emph{Part 2:} Let $\dot{V}$ be the set-valued derivative of $V$ with respect to time, along the trajectories of the system \eqref{eq:network_system}. From \eqref{eq:V_01} we obtain $\dot{V} = \dot{V}_1 + \dot{V}_2 + \dot{V}_3$. We obtain $\dot{V}_1 = \eta\wt{\vect{q}}^\top\mat{X}\dot{\wt{\vect{q}}}$. The second summand yields
\begin{equation*}
\dot{V}_2 \in \mu^{-1} \dot{\wt{\vect q}}^\top \mat{\Gamma}_p \mat{M}\dot{\wt{\vect q}} + \mu^{-1}\wt{\vect q}^\top \mat{\Gamma}_p \dot{\mat{M}}\dot{\wt{\vect q}} + \mu^{-1}\wt{\vect q}^\top \mat{\Gamma}_p \mat{M} \times \mathcal{K}\Big[\ddot{\wt{\vect q}}\Big]
\end{equation*}
Substituting $\ddot{\wt{\vect q}}$ from \eqref{eq:network_system}, and using Assumption \ref{prty:P3_skew}, we obtain
\begin{align}\label{eq:V2_dot}
\dot{V}_2 & \in \mathcal{K}\Big[  -\wt{\vect q}^\top(\mat{\Gamma}_p\mathcal{L}_{22}\otimes\mat{I}_p)(\mu^{-1}\eta\wt{\vect{q}} + \eta\dot{\wt{\vect{q}}}) + \mu^{-1} \dot{\wt{\vect q}}^\top \mat{\Gamma}_p \mat{M}\dot{\wt{\vect q}} \nonumber \\
&\; + \mu^{-1}\wt{\vect q}^\top \mat{\Gamma}_p \mat{C}^\top\dot{\wt{\vect q}} - \mu^{-1}\wt{\vect q}^\top \mat{\Gamma}_p (\Delta + \mat{C}(\vect{1}_n\times\dot{\vect{q}}_0)) \nonumber \\
& \quad- \beta\mu^{-1}\wt{\vect q}^\top \mat{\Gamma}_p \sgn\left(\vect s+\mu\dot{\vect s}\right)\Big]
\end{align}
where $\Delta = \vect g +\vect{\zeta} +\mat{M}(\vect{1}_n\times\ddot{\vect{q}}_0)$. Similarly, $\dot{V}_3$ is
\begin{align}
\dot{V}_3 & \in \dot{\wt{\vect q}}^\top \mat{\Gamma}_p \mat{M}\times \mathcal{K} \Big[\ddot{\wt{\vect q}}\Big] + \frac{1}{2}\dot{\wt{\vect{q}}}^\top \mat{\Gamma}_p \dot{\mat{M}}\dot{\wt{\vect{q}}}
\end{align}
Substituting $\ddot{\wt{\vect q}}$ from \eqref{eq:network_system} and using Assumption \ref{prty:P3_skew} we obtain
\begin{align}\label{eq:V3_dot}
& \dot{V}_3  \in \mathcal{K}\Big[ -\eta\dot{\wt{\vect q}}^\top(\mat{\Gamma}\mathcal{L}_{22}\otimes\mat{I}_p)\wt{\vect{q}} - \mu\eta\dot{\wt{\vect q}}^\top(\mat{\Gamma}\mathcal{L}_{22}\otimes\mat{I}_p)\dot{\wt{\vect{q}}} \nonumber \\
& \;\;\;- \beta\dot{\wt{\vect q}}^\top \mat{\Gamma}_p \sgn\left(\vect s +\mu\dot{\vect s}\right)  - \dot{\wt{\vect q}}^\top \mat{\Gamma}_p (\Delta + \mat{C}(\vect{1}_n\times\dot{\vect{q}}_0)) \Big]
\end{align}

When combining $\dot{V} \in \dot{V}_1 +\dot{V}_2 + \dot{V}_3$ notice that $\dot{V}_1$, the term $-\wt{\vect q}^\top(\mat{\Gamma}\mathcal{L}_{22}\otimes\mat{I}_p)\dot{\wt{\vect{q}}}$ of \eqref{eq:V2_dot} and the first summand of \eqref{eq:V3_dot} cancel. Let $\vect x = \wt{\vect q}+\mu\dot{\wt{\vect q}}$ and $\vect y = \vect e + \mu\dot{\vect e}$. Recalling the definition of $\vect s = \wt{\vect q} - \vect e$, we thus have
\begin{align}\label{eq:Vdot_combined}
\dot{V} &\in -\mu^{-1}\mathcal{K}\Big[  \frac{1}{2}\eta\wt{\vect q}^\top \mat X \wt{\vect q} + \frac{1}{2}\mu^2\eta\dot{\wt{\vect q}}^\top \mat X \dot{\wt{\vect q}} - \dot{\wt{\vect q}}^\top \mat{\Gamma}_p \mat M \dot{\wt{\vect q}}\nonumber \\
& \qquad  - \wt{\vect q}^\top \mat{\Gamma}_p \mat C^\top \dot{\wt{\vect q}} +  \vect x^\top \mat{\Gamma}_p \mat C (\vect 1 \otimes \dot{\vect q}_0) -  \vect x^\top \mat{\Gamma}_p \Delta \nonumber \\
& \qquad - \beta \vect x^\top \mat{\Gamma}_p \sgn (\vect x - \vect y)\Big]
\end{align}
From the bounds on $\vect g$, $\mat M$ and $\vect 1 \otimes \ddot{\vect q}_0$, and because we normalised $\bar\gamma = 1$, it follows that \mbox{$\vect x^\top \mat{\Gamma}_p \Delta \leq \xi \Vert \vect x \Vert_2$} where $\xi = k_g + k_{\zeta} + k_{\overline{M}}k_q$.
From Assumption \ref{prty:P4_Cg_bound}, the property of norms and the definition of $\wt{\vect q}$, it follows that $\Vert \mat C \Vert_2 = \Vert \mat C^\top \Vert_2 \leq k_C \Vert \dot{\vect q}\Vert_2 \leq k_C(\Vert \dot{\wt{\vect q}}\Vert_2 + k_p)$. Thus 
\begin{subequations}\label{eq:C_upper_bounds}
\begin{align}
&\wt{\vect q}^\top \mat{\Gamma}_p \mat C^\top \dot{\wt{\vect q}} \leq k_C k_p\Vert \wt{\vect q}\Vert_2 \Vert \dot{\wt{\vect q}}\Vert_2 + k_C \Vert \wt{\vect q}\Vert_2 {\Vert \dot{\wt{\vect q}}\Vert_2}^2 \\
&(\wt{\vect q} + \mu\dot{\wt{\vect q}})^\top \mat{\Gamma}_p \mat C (\vect 1 \otimes \dot{\vect q}_0)  \leq  k_C k_p\Vert \wt{\vect q}\Vert_2 \Vert \dot{\wt{\vect q}}\Vert_2 + \mu k_C k_p{\Vert \dot{\wt{\vect q}}\Vert_2}^2 \nonumber \\
& + \mu k_C k_p^2\Vert \mu^{-1}\wt{\vect q} + \dot{\wt{\vect q}}\Vert_2 
\end{align}
\end{subequations}
Let $\varphi(\mu,\eta) = \frac{1}{2}\mu^2 \eta \lambda_{\min}(\mat{X}) - \mu k_C k_p - k_{\overline{M}}$. Define the functions $\dot{V}_A$ (absolutely continuous) and $\dot{V}_B$ (set-valued) as
\begin{subequations}
\begin{align}
\dot{V}_A & = -\mu^{-1} \big( \varphi(\mu, \eta ){\Vert\dot{\wt{\vect q}}\Vert_2}^2 +\frac{1}{2} \eta \lambda_{\min} (\mat X) {\Vert \wt{\vect q}\Vert_2}^2  \nonumber \\
& \quad -2 k_C k_p\Vert \wt{\vect q}\Vert_2 \Vert \dot{\wt{\vect q}}\Vert_2 - k_C \Vert \wt{\vect q}\Vert_2 {\Vert \dot{\wt{\vect q}}\Vert_2}^2\big)  \label{eq:Vdot_A} \nonumber \\ 
& \quad \triangleq -\mu^{-1} g( \Vert \wt{\vect q}\Vert_2, \Vert\dot{\wt{\vect q}}\Vert_2 ) \\
\dot{V}_B & \in \mu^{-1} \mathcal{K}\Big[\! - \beta\vect x^\top \mat{\Gamma}_p \sgn (\vect x - \vect y) + k_C k_p^2\Vert \vect x\Vert_2 + \xi\Vert \vect x \Vert_2 \Big] \label{eq:Vdot_B}
\end{align}
\end{subequations}
By applying the inequalities in \eqref{eq:C_upper_bounds}, and the eigenvalue inequalities noted in Section \ref{subsec:matrix_theory_background}, we conclude that 
 \begin{align}\label{eq:Vdot_upbound_comb}
 \dot{V} & \leq \dot{V}_A + \dot{V}_B
 \end{align} 

\emph{Part 3:} In \emph{Part 3.1}, we study $\dot{V}_A$ and $\dot{V}_B$ separately to establish negative definiteness properties. Then, \emph{Part 3.2} studies $\dot{V}_A + \dot{V}_B$ and proves a boundedness property. 

\emph{Part 3.1:} Consider the region of the state variables given by $\Vert \wt{\vect q}(t)\Vert_2 \in [0,\mathcal{X}]$ and \mbox{$\Vert \dot{\wt{\vect q}}(t)\Vert_2 \in [0, \infty)$} where $\mathcal{X} > 0$ was computed in Section~\ref{subsec:IC_bound}. One can compute a $\mu^*_5 \geq \mu^*_4$ and $\eta_1^*$ such that $\varphi(\mu,\eta) > 0, \forall \mu \geq \mu^*_5, \eta \geq \eta_1^*$. Note that $\mat L_\mu > \mat H_\mu > \mat N_\mu > 0$ continues to hold. Observe that $g( \Vert \wt{\vect q}\Vert_2, \Vert\dot{\wt{\vect q}}\Vert_2)$ in \eqref{eq:Vdot_A} is of the same form as $g(x,y)$ in Lemma~\ref{lem:bounded_pd_function} with $x = \Vert \wt{\vect q}\Vert_2$ and $y = \Vert \dot{\wt{\vect q}}\Vert_2$. With $b = \varphi(\mu,\eta) > 0$, check if the inequality $\varphi(\mu_5^*, \eta_1^*) > \frac{ (2k_C k_p)^2 }{2 \eta_1^* \lambda_{min}(\mat X)} + k_C \mathcal{X}$ holds. If the inequality holds then $\dot{V}_A$ in \eqref{eq:Vdot_A} is negative definite in the region and proceed to \emph{Part 3.2} However, if the inequality does not hold, then there exists a $\mu_6^* \geq \mu_5^*$ and $\eta_2^* \geq \eta_1^*$ such that
\begin{equation}\label{eq:cond_mu_VdotA_ND}
\varphi(\mu_6^*, \eta_2^*) > \frac{ (2k_C k_p)^2 }{2 \eta_2^* \lambda_{min}(\mat X)} + k_C \mathcal{X}
\end{equation}
Recall from \eqref{eq:V_bar} and \eqref{eq:X_bound} that $\wh{V}^*$ is dependent on $\eta$, but independent of $\mu$ because $\mu_6^* \geq \mu_3^*$. One could leave $\eta_2^* = \eta_1^*$ and find a sufficiently large $\mu_6^*$ to satisfy \eqref{eq:cond_mu_VdotA_ND}. Alternatively, we could increase $\eta$. Notice that $\rho_2(\mu_3^*)$ and $\wh{V}^*$ are both of $\mathcal{O}(\eta)$. Thus, as $\eta$ increases, $\mathcal{X}$ becomes independent of $\eta$, whereas $\varphi = \mathcal{O}(\eta)$. We conclude that there exists a sufficiently large $\eta_2^*$ satisfying \eqref{eq:cond_mu_VdotA_ND}, and for which $\mathcal{X}_1$, $\mathcal{Y}_1$, $\mathcal{X}$, $\mathcal{Y}$ need not be recomputed. With $\mu_6^*, \eta_2^*$ satisfying \eqref{eq:cond_mu_VdotA_ND}, $\dot{V}_A < 0$ in the aforementioned region.

Now consider $\dot{V}_B$ over two time intervals, $ t_\mathcal{P} = [0, T_1)$ and $t_\mathcal{Q} = [T_1, T_2)$, where $T_1$ is given in Theorem~\ref{thm:dist_observer_cao} and $T_2$ is the infimum of those values of $t$ for which one of the inequalities $\Vert \wt{\vect q}(t)\Vert_2 < \mathcal{X}$, $\Vert \dot{\wt{\vect q}}(t)\Vert_2 < \mathcal{Y}$ fails. In \emph{Part 3.2}, we argue that without loss of generality, it is possible to take $T_2> T_1$. In fact, we establish that the inequalities never fail; $T_2$ does not exist and thus $t_{\mathcal{Q}} = [T_1, \infty)$\footnote{Establishing that $T_2$ does not exist rules out the possibility of finite-time escape for system \eqref{eq:euler_lagrange_general_agent}.}. 

Consider firstly $t\in t_\mathcal{P}$. Observe that the set-valued function $-\beta\vect x^\top \mat{\Gamma}_p \sgn(\vect x - \vect y)$ is upper bounded by the single-valued function $\beta \Vert \vect x \Vert_1$. Recalling $\dot{V}_B$ in \eqref{eq:Vdot_B} yields
\begin{align}\label{eq:Vdot_B_tP}
\dot{V}_B & \leq  (\sqrt{n}\beta +k_Ck_p^2+\xi) (\mu^{-1}\Vert \wt{\vect q} \Vert_2 + \Vert \dot{\wt{\vect q}}\Vert_2) := \dot{V}_{\overline{ B}}
\end{align}
because $\Vert \cdot \Vert_2 \leq \Vert \cdot \Vert_1 \leq \sqrt{n}\Vert \cdot \Vert_2$ \cite{horn2012matrixbook}.

For $t\in t_\mathcal{Q}$, Theorem \ref{thm:dist_observer_cao} yields that $\vect e(t)= \dot{\vect e}(t) = \vect 0$, which implies that $\vect y = \vect 0$. Thus, the set-valued term $\mathcal{K}[\vect x^\top \mat{\Gamma}_p \sgn(\vect x-\vect y)]$ in \eqref{eq:Vdot_B} becomes the singleton $\mathcal{K}[\vect x^\top \mat{\Gamma}_p \sgn(\vect x)] = \{\Vert \mat{\Gamma}_p \vect x \Vert _1\}$ (since $\mat{\Gamma}_p > 0$ is diagonal). It then follows that
\begin{align}
\dot{V}_B & = - \mu^{-1} \beta \Vert \mat{\Gamma}_p \vect x \Vert_1 + k_C k_p^2\Vert \vect x\Vert_2 + \xi\Vert \vect x \Vert_2 
\end{align}
In other words, $\dot{V}_B$ for $t\in t_\mathcal{Q}$ is a continuous, single-valued function in the variables $\wt{\vect q}$ and $\dot{\wt{\vect q}}$. For $t\in t_\mathcal{Q}$, we observe that \mbox{$\dot{V}_B \leq - \mu^{-1}(\beta\underline{\gamma} -k_C k_p^2- \xi) \Vert \wt{\vect q} + \mu\dot{\wt{\vect q}} \Vert_1 < 0$} if 
\begin{equation}\label{eq:cond_beta}
\beta > (k_Ck_p^2+\xi)/\underline{\gamma} 
\end{equation}

\emph{Part 3.2:} To aid in this part of the proof, refer to Figure~\ref{fig:stability_diag_tracking}.

Consider firstly $\dot{V}$ for $t\in t_\mathcal{P}$. Specifically, let $\dot{V}_{t_\mathcal{P}} \triangleq \dot{V}_A+\dot{V}_{\overline{B}}$, which gives
\begin{align}\label{eq:Vdot_bound_T1}
\dot{V}_{t_\mathcal{P}} & = -\mu^{-1} \Big[ \varphi(\mu, \eta ){\Vert\dot{\wt{\vect q}}\Vert_2}^2 +\frac{1}{2} \eta \lambda_{\min} (\mat X) {\Vert \wt{\vect q}\Vert_2}^2  \nonumber \\
& \quad -2 k_C k_p\Vert \wt{\vect q}\Vert_2 \Vert \dot{\wt{\vect q}}\Vert_2 - k_C \Vert \wt{\vect q}\Vert_2 {\Vert \dot{\wt{\vect q}}\Vert_2}^2 \nonumber \\ 
& \quad \; - (\sqrt{n} \beta\underline{\gamma} + k_C k_p^2 + \xi) \big(\Vert \wt{\vect q}\Vert_2 + \mu \Vert \dot{\wt{\vect q}} \Vert_2\big) \Big] \nonumber \\ 
& \quad \; \; \triangleq -\mu^{-1} p ( \Vert \wt{\vect q} \Vert_2, \Vert \dot{\wt{\vect q}} \Vert_2 )
\end{align}
Note that $\dot{V} \leq \dot{V}_{t_{\mathcal{P}}}$, i.e. $\dot{V}$ for $t\in t_\mathcal{P}$ is a differential inclusion which is upper bounded by a continuous function. Observe that $p ( \Vert \wt{\vect q} \Vert_2, \Vert \dot{\wt{\vect q}} \Vert_2 )$ is of the form of $h(x,y)$ in Corollary~\ref{cor:bounded_pd_function_2} with $x = \Vert \wt{\vect q}\Vert_2$ and $y = \Vert \dot{\wt{\vect q}}\Vert_2$. Here, $b = \varphi(\mu,\eta)$, $a = \frac{1}{2}\eta\lambda_{\min}(\mat X)$, $c = k_C$, $d = 2 k_C k_p$, $e = (\sqrt{n}\beta +k_Ck_p^2+\xi)$ and $f = \mu e$. Thus, for some given $\vartheta, \varepsilon, \mathcal{X}, \mathcal{Y}$ satisfying the requirements detailed in Corollary~\ref{cor:bounded_pd_function_2}, one can use \eqref{eq:Du_b_ineq} and \eqref{eq:Dv_b_ineq} to find a $\mu,\eta$ such that $p ( \Vert \wt{\vect q} \Vert_2, \Vert \dot{\wt{\vect q}} \Vert_2 )$ is positive definite in the region $\mathcal{R}$. Note that $\vartheta, \varepsilon$ can be selected by the designer. Choose $\vartheta > \mathcal{X} - \mathcal{X}_1$ and $\varepsilon > \mathcal{Y}-\mathcal{Y}_1$, and ensure that $\mathcal{X}-\vartheta, \mathcal{Y}-\varepsilon > 0$. Note the fact that $\mathcal{X}\geq \mathcal{X}_1$ and $\mathcal{Y}\geq \mathcal{Y}_1$ implies $\vartheta, \varepsilon > 0$. 

Define the sets $\mathcal{U}$, $\mathcal{V}$ and the region $\mathcal{R}$ as in Corollary~\ref{cor:bounded_pd_function_2} with $x = \Vert \wt{\vect q}\Vert_2$ and $y = \Vert \dot{\wt{\vect q}}\Vert_2$. Define further sets $\bar{\mathcal{U}} = \{\Vert\wt{\vect q}\Vert_2: \Vert \wt{\vect q}\Vert_2 > \mathcal{X}\}$ and  $\bar{\mathcal{V}} = \{\Vert \dot{\wt{\vect q}}\Vert_2 : \Vert \dot{\wt{\vect q}} \Vert_2 > \mathcal{Y}\}$. Define the compact region $\mathcal{S} = \mathcal{U}\cup \mathcal{V} \setminus \bar{\mathcal{U}}\cup \bar{\mathcal{V}}$, see Fig.~\ref{fig:stability_diag_tracking} for a visualisation of $\mathcal{S}$. Note $\mathcal{S} \subset \mathcal{R}$.
Using Corollary~\ref{cor:bounded_pd_function_2}, and with precise calculation details given in \cite[Theorem 2]{ye2018ELtracking_arXiv}, one can find a pair of gains $\eta$ and $\mu$ which ensures that $p ( \Vert \wt{\vect q} \Vert_2, \Vert \dot{\wt{\vect q}} \Vert_2 )$ is positive definite in $\mathcal{R}$. This implies $p( \Vert \wt{\vect q} \Vert_2, \Vert \dot{\wt{\vect q}} \Vert_2 )$ is positive definite in $\mathcal{S}$. It follows that $\dot{V}_{t_\mathcal{P}}$ is negative definite in $\mathcal{S}$. Further define the region $\Vert \wt{\vect q}(t) \Vert_2 \in [0, \mathcal{X}-\vartheta)$, $\Vert \dot{\wt{\vect q}}(t)\Vert_2\in [0, \mathcal{Y}-\varepsilon)$ as $\mathcal{T}$, again with visualisation in Fig~\ref{fig:stability_diag_tracking}.

Now we justify the fact that we can assume $T_2>T_1$. In fact, in doing so, we show that the existence of $T_2$ creates a contradiction; the trajectories of \eqref{eq:network_system} remain in $\mathcal{T} \cup \mathcal{S}$ for all time. See Fig~\ref{fig:stability_diag_tracking} for a visualisation. Although $\dot{V}$ is sign indefinite in $\mathcal{T}$ (i.e. $V(t)$ can increase), notice from \eqref{eq:V_upper_bound} that, in $\mathcal{T}$ there holds
\begin{align}\label{eq:V_bound_inner}
V(t) & \leq \frac{1}{2}\eta\lambda_{\max}(\mat X) (\mathcal{X}-\vartheta)^2 + \frac{1}{2}k_{\overline{M}}(\mathcal{Y}-\varepsilon)^2 \nonumber \\
&\quad  + \mu^{-1} k_{\overline{M}}(\mathcal{X}-\vartheta)(\mathcal{Y}-\varepsilon) := \mathcal{Z}
\end{align}
Recalling that $\delta > 0$ and is arbitrarily small, one can easily verify that $\mathcal{Z} < \wh{V}^*$ because we selected $\vartheta, \varepsilon$ such that $\mathcal{X}_1 > \mathcal{X} - \vartheta$ and $\mathcal{Y}_1 > \mathcal{Y}- \varepsilon$. In addition, recall $\dot{V}$ is negative definite in $\mathcal{S}$  and now observe the following facts. For any trajectory starting in $\mathcal{S}$ that enters $\mathcal{T}$ at some time $\bar t < T_2$, there holds $V(t) < V(0)$ for all $t \leq \bar t$. Any trajectory starting in $\mathcal{S}$ that stays in $\mathcal{S}$ for all $t$ up to $T_2$ satisfies $V(t) < V(0)$. Any trajectory in $\mathcal{T}$ satisfies $V(t) < \mathcal{Z}$. If any trajectory leaves $\mathcal{T}$ and enters $\mathcal{S}$ at some $\hat t < T_2$, we observe that the crossover point is in the closure of $\mathcal{T}$. Because $V$ is continuous (since the Filippov solutions for $\wt{\vect{q}}, \dot{\wt{\vect q}}$ are absolutely continuous), we have $V(\hat t) \leq \mathcal{Z}$. Because the trajectory enters $\mathcal{S}$, where $\dot{V} < 0$, we also have $V(\hat t + \delta_1) < V(\hat t) \leq \mathcal{Z}$, for some arbitrarily small $\delta_1$. This implies that all trajectories of \eqref{eq:network_system} beginning\footnote{
It is evident from \eqref{eq:initial_X1} that $\wt{\vect q}(0), \dot{\wt{\vect q}}(0) \in \mathcal{S}\cup \mathcal{T}$.} in $\mathcal{T} \cup \mathcal{S}$ satisfy \mbox{$V(t) \leq \max\{\mathcal{Z}, V(0)\} < \wh{V}^*$} for all $t \leq T_2$.


On the other hand, at $T_2$, and in accordance with Lemma~\ref{lem:W_bound}, there holds
\begin{equation}\label{eq:contradict_T2_maximal}
\Vert \wt{\vect q}(T_2) \Vert_2 \leq \sqrt{\frac{V(T_2)}{\chi}} < \sqrt{\frac{\wh{V}^*}{\chi}} < \sqrt{\frac{\wh{V}^*}{\rho_2(\mu^*_3)}} = \mathcal{X}
\end{equation}
where $\chi =\lambda_{\min}(\frac{1}{2}\eta\mat X - \frac{1}{2}\mu^{-2}\mat{\Gamma}_p \mat M ) > \rho_2(\mu^*_3)$.
One can also show that $\Vert \dot{\wt{\vect q}}(T_2)\Vert_2 < \mathcal{Y}$ using an argument paralleling the argument leading to \eqref{eq:contradict_T2_maximal}; we omit this due to spatial limitations. The existence of \eqref{eq:contradict_T2_maximal} and a similar inequality for $\Vert \dot{\wt{\vect q}}(T_2)\Vert_2$ contradicts the definition of $T_2$. In other words, $T_2$ does not exist and $\Vert \wt{\vect q}(t)\Vert_2 < \mathcal{X}$, $\Vert \dot{\wt{\vect q}}(t)\Vert_2 < \mathcal{Y}$ hold for all $t$.

\emph{Part 4:} Observe that $\dot{V}_B$ changes at $t = T_1$ to become negative definite. Consider now $t\in t_\mathcal{Q}=[T_1,T_2)$. Recalling that $\dot{V}\leq \dot{V}_A + \dot{V}_B$, we have
\begin{align}\label{eq:Vdot_tQ_exp}
\dot{V} & \leq -\mu^{-1} \Big[ \varphi(\mu, \eta){\Vert\dot{\wt{\vect q}}\Vert_2}^2 +\frac{1}{2}\eta\lambda_{\min} (\mat X) {\Vert \wt{\vect q}\Vert_2}^2\nonumber \\
& \;  -2 k_C k_p\Vert \wt{\vect q}\Vert_2 \Vert \dot{\wt{\vect q}}\Vert_2 - k_C \Vert \wt{\vect q}\Vert_2 {\Vert \dot{\wt{\vect q}}\Vert_2}^2\big) \nonumber \\
& \quad - (\beta -k_C k_p^2- \xi) \Vert \wt{\vect q}+\mu\dot{\wt{\vect q}}  \Vert_1 \Big]< 0
\end{align} 
in the region $\mathcal{D}: = \mathcal{S}\cup \mathcal{T}$. From the fact that $\Vert \dot{\wt{\vect q}}(T_1)\Vert_2 < \mathcal{Y}$, there holds $\dot{V}(T_1) < 0$. The argument applied to the interval $[0,\min\{T_1,T_2\}]$ above, culminating in \eqref{eq:contradict_T2_maximal}, is now applied to the interval $t_\mathcal{Q}$. Since $\dot{V} < 0$ in $\mathcal{D}$ and at $T_1$, the trajectory is in $\mathcal{D}$, we have $V(T_1) < V(T_2) < \wh{V}^*$. It follows that \eqref{eq:contradict_T2_maximal} continues to hold (and equally for the argument regarding $\Vert\dot{\wt{\vect q}}\Vert_2$). It remains true that $T_2$ does not exist, implying that the trajectory of \eqref{eq:network_system} remains in $\mathcal{D}$ and $\dot{V} < 0$ for $t\in[T_1,\infty)$.  

Recall from below \eqref{eq:cond_V_PD} that the eigenvalues of $\mat{H}_\mu$ are uniformly upper bounded away from infinity and lower bounded away from zero by constants. Specifically, there holds $\lambda_{\min}(\mat{N}_\mu) \Vert {[\wt{\vect q}^\top, \dot{\wt{\vect q}}^\top]^\top\Vert_2}^2 \leq V \leq \lambda_{\max}(\mat{L}_\mu) \Vert {[\wt{\vect q}^\top, \dot{\wt{\vect q}}^\top]^\top\Vert_2}^2$. Because $\mathcal{D}$ is compact, one can find a scalar $a_3 > 0$ such that $\dot{V} \leq - a_3 {\Vert [\wt{\vect q}^\top, \dot{\wt{\vect q}}^\top]^\top\Vert_2}^2 $. It follows that $\dot{V}\leq -[a_3/\lambda_{\max}(\mat{L}_\mu)] V$ in $\mathcal{D}$. This inequality is used to conclude that $V$ decays exponentially fast to zero, with a minimum rate $e^{-a_3/\lambda_{\max}(\mat{L}_\mu) t}$ \cite{khalil2002nonlinear}. Specifically, there holds 
\begin{equation}\label{eq:exp_converg_rate}
\Vert \begin{bmatrix}\wt{\vect q}(t) \\ \dot{\wt{\vect q}}(t)\end{bmatrix}\Vert_2 \leq \frac{\lambda_{\max}(\mat{L}_\mu)}{\lambda_{\min}(\mat{N}_\mu)} \Vert \begin{bmatrix}\wt{\vect q}(0) \\ \dot{\wt{\vect q}}(0)\end{bmatrix}\Vert_2 e^{-\frac{a_3}{\lambda_{\max}(\mat{L}_\mu)}t}
\end{equation}
It follows that $\lim_{t\to\infty} \wt{\vect q}(t) = \vect 0_n$ and $\lim_{t\to\infty} \dot{\wt{\vect q}}(t) = \vect 0_n$ exponentially and the leader tracking objective is achieved.
\end{proof}
\end{theorem}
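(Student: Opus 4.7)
The plan is to construct a Lyapunov-like candidate of the form $V = \tfrac{1}{2}\eta \wt{\vect q}^\top \mat X \wt{\vect q} + \mu^{-1}\wt{\vect q}^\top \mat{\Gamma}_p \mat M \dot{\wt{\vect q}} + \tfrac{1}{2}\dot{\wt{\vect q}}^\top \mat{\Gamma}_p \mat M \dot{\wt{\vect q}}$ with $\mat X = (\mat\Gamma\mathcal{L}_{22}+\mathcal{L}_{22}^\top\mat\Gamma)\otimes\mat I_p > 0$ guaranteed by Lemma~\ref{cor:dst_laplacian_pd}. Applying the Schur complement (Lemma~\ref{theorem:schur_complement}) together with the uniform bounds on $\mat M$ from Property~\ref{prty:P2_M_bound}, I would show that for $\mu$ exceeding an explicit threshold $\mu_4^\ast$, the associated matrix $\mat H_\mu$ is sandwiched between the matrices $\mat N_\mu$ and $\mat L_\mu$ introduced in Section~\ref{subsec:IC_bound}. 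This sandwich transfers the semi-global initial-condition bounds $\mathcal{X}$, $\mathcal{Y}$ from $\bar V_\mu$ to $V$, and also supplies the two-sided quadratic bounds on $V$ needed for the final exponential estimate.

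Next, I would differentiate $V$ along Filippov solutions of \eqref{eq:network_system}, exploiting the skew-symmetry Property~\ref{prty:P3_skew} to cancel the $\dot{\mat M}$ contributions and regrouping the $\eta\wt{\vect q}^\top(\mat\Gamma\mathcal{L}_{22}\otimes\mat I_p)\dot{\wt{\vect q}}$ cross terms so that the dominant negative piece is $-\mu^{-1}\bigl(\tfrac{1}{2}\eta\wt{\vect q}^\top\mat X\wt{\vect q}+\tfrac{1}{2}\mu^2\eta\dot{\wt{\vect q}}^\top\mat X\dot{\wt{\vect q}}\bigr)$. Bounding the residual Coriolis/gravity/disturbance terms via Properties \ref{prty:P4_Cg_bound}, \ref{prty:P5_dist_bound} and Assumption~\ref{assm:leader_traj} yields an upper bound $\dot V \leq \dot V_A + \dot V_B$, where $\dot V_A$ is a polynomial in $\Vert\wt{\vect q}\Vert_2$ and $\Vert\dot{\wt{\vect q}}\Vert_2$ of precisely the form $g(x,y)$ of Lemma~\ref{lem:bounded_pd_function}, and $\dot V_B$ isolates the sign-function contribution together with the observer error $\vect y = \vect e + \mu\dot{\vect e}$.

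The argument then splits at $t = T_1$, the finite convergence time of the observer provided by Theorem~\ref{thm:dist_observer_cao}. On $[T_1,\infty)$ we have $\vect y \equiv \vect 0$, so $\vect x^\top \mat\Gamma_p \sgn(\vect x) = \Vert\mat\Gamma_p\vect x\Vert_1$ is single-valued and provides a strictly negative contribution whenever $\beta > (k_Ck_p^2+\xi)/\underline{\gamma}$; choosing $\mu,\eta$ large enough via Lemma~\ref{lem:bounded_pd_function} then makes $\dot V$ negative definite on the semi-global region. On $[0,T_1)$, only $\Vert\sgn(\cdot)\Vert_1 \leq \sqrt n$ is available, so $\dot V_B$ contributes linear-in-norm terms; the aggregate upper bound on $\dot V$ then has exactly the shape of $h(x,y)$ in Corollary~\ref{cor:bounded_pd_function_2}, and we pick $\vartheta,\varepsilon$ with $\mathcal{X}-\vartheta > \mathcal{X}_1$ and $\mathcal{Y}-\varepsilon>\mathcal{Y}_1$ and apply inequalities \eqref{eq:Du_b_ineq}, \eqref{eq:Dv_b_ineq} to make this bound positive definite on a compact shell $\mathcal{S}$ surrounding the inner ball $\mathcal{T}$. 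A level-set argument combined with continuity of $V$ along Filippov solutions then rules out the existence of the escape time $T_2$ defined in \emph{Part 3.1} of the proof sketch, showing trajectories remain in $\mathcal{S}\cup\mathcal{T}$ for all time. Once this uniform bound is secured, the $\dot V \leq \dot V_A + \dot V_B$ estimate becomes strictly negative on the compact set $\mathcal{D} = \mathcal{S}\cup\mathcal{T}$ beyond $T_1$, and the eigenvalue sandwich $\lambda_{\min}(\mat N_\mu) \leq \lambda(\mat H_\mu) \leq \lambda_{\max}(\mat L_\mu)$ converts this into $\dot V \leq -\alpha V$, delivering the explicit exponential rate \eqref{eq:exp_converg_rate}.

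The main obstacle is precisely the initial window $[0,T_1)$: during this period the controller is driven by a nonzero but bounded observer error, so $V$ need not decrease and we can only argue non-escape from a carefully chosen compact region. The challenge is to coordinate three distinct gain-design inequalities (one making $V$ positive definite, one making $\varphi(\mu,\eta)$ positive so $\dot V_A$ has the correct sign, and one from Corollary~\ref{cor:bounded_pd_function_2} guaranteeing positivity of $p(\Vert\wt{\vect q}\Vert_2,\Vert\dot{\wt{\vect q}}\Vert_2)$ on the shell $\mathcal{S}$) simultaneously, while also ensuring that increasing $\eta$ to satisfy the shell inequality does not inflate the bound $\mathcal{X}$ and thereby force a recomputation; the observation that $\rho_2(\mu_3^\ast)$ and $\wh V^\ast$ are both $\mathcal{O}(\eta)$ is precisely what makes the circular dependence resolvable.
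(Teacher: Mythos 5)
Your proposal follows the paper's own proof in every essential respect: the same Lyapunov-like candidate $V$ built from $\mat X = (\mat\Gamma\mathcal{L}_{22}+\mathcal{L}_{22}^\top\mat\Gamma)\otimes\mat I_p$, the same Schur-complement sandwich $\mat N_\mu < \mat H_\mu < \mat L_\mu$, the same decomposition $\dot V \leq \dot V_A + \dot V_B$ with Lemma~\ref{lem:bounded_pd_function} handling $\dot V_A$ and Corollary~\ref{cor:bounded_pd_function_2} handling the pre-observer-convergence window, the same split at $T_1$, the same contradiction ruling out $T_2$, and the same eigenvalue sandwich yielding \eqref{eq:exp_converg_rate}. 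The one substantive slip is the direction of your choice of $\vartheta,\varepsilon$: you require $\mathcal{X}-\vartheta > \mathcal{X}_1$ and $\mathcal{Y}-\varepsilon > \mathcal{Y}_1$, whereas the level-set argument needs the opposite, namely $0 < \mathcal{X}-\vartheta < \mathcal{X}_1$ and $0 < \mathcal{Y}-\varepsilon < \mathcal{Y}_1$ (i.e.\ $\vartheta > \mathcal{X}-\mathcal{X}_1$, $\varepsilon > \mathcal{Y}-\mathcal{Y}_1$). The reason the direction is load-bearing is that the non-escape argument hinges on $\mathcal{Z} < \wh V^*$, where $\mathcal{Z}$ is the supremum of the upper bound \eqref{eq:V_upper_bound} on $V$ over the inner box $\mathcal{T} = [0,\mathcal{X}-\vartheta)\times[0,\mathcal{Y}-\varepsilon)$ and $\wh V^*$ is the analogous quantity evaluated at $(\mathcal{X}_1,\mathcal{Y}_1)$; with your inequality $\mathcal{Z}$ may exceed $\wh V^*$, so a trajectory could grow inside $\mathcal{T}$ (where $\dot V$ is sign indefinite) past the level set that is contained in $\Vert\wt{\vect q}\Vert_2 < \mathcal{X}$, $\Vert\dot{\wt{\vect q}}\Vert_2 < \mathcal{Y}$, and the contradiction at $T_2$ in \eqref{eq:contradict_T2_maximal} would no longer follow. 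Reversing the two inequalities repairs the argument with no other changes; everything else, including your resolution of the circular dependence between $\eta$ and $\mathcal{X}$ via the $\mathcal{O}(\eta)$ growth of both $\rho_2(\mu_3^*)$ and $\wh V^*$, matches the paper.
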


\begin{figure}[t]
\centering{
\resizebox{0.75\columnwidth}{!}{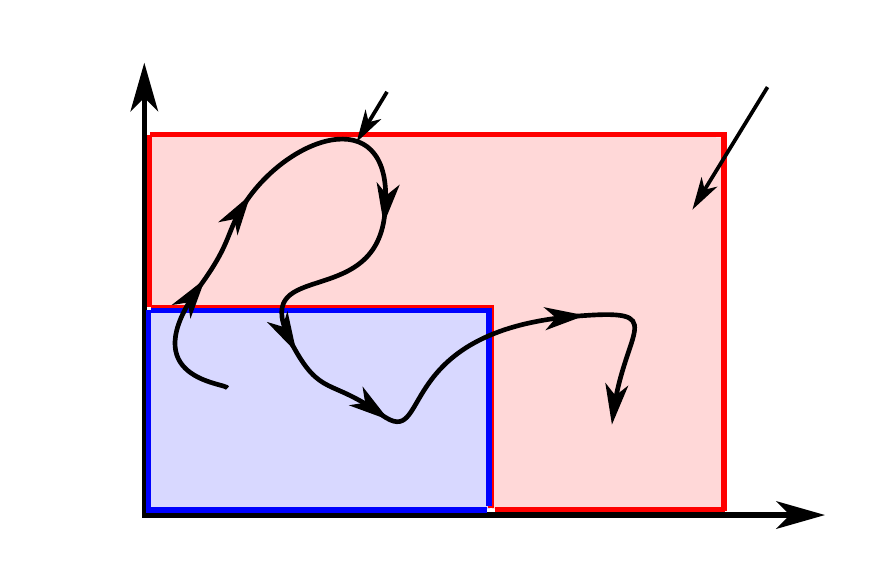}
\caption{Diagram for \emph{Part 3} of the proof of Theorem~\ref{theorem:main_result}. The red region is $\mathcal{S}$, in which $\dot{V}(t) < 0$ for all $t \geq 0$. The blue region is $\mathcal{T}$, in which $\dot{V}(t)$ is sign indefinite. A trajectory of \eqref{eq:network_system} is shown with the black curve. We define $t = T_2$, if it exists, as the infimum of all $t$ values for which one of the inequalities $\Vert \wt{\vect{q}}(T_2) \Vert < \mathcal{X}$ or $\Vert \dot{\wt{\vect q}}(T_2) \Vert < \mathcal{Y}$ fails to hold, i.e. as the time at which the system~\eqref{eq:network_system} first leaves $\mathcal{S}$. By contradiction, it is shown in \emph{Part 3.2} that the trajectory of \eqref{eq:network_system} satisfies $
\Vert \wt{\vect{q}}(T_2) \Vert < \mathcal{X}, \Vert \dot{\wt{\vect q}}(T_2) \Vert < \mathcal{Y}$. I.e., $T_2$ does not exist and the trajectory remains in $\mathcal{T}\cup \mathcal{S}, \forall\,t$. 
The sign indefiniteness of $\dot{V}$ in $\mathcal{T}$ arises due to terms linear in $\Vert \wt{\vect q} \Vert$, $\Vert \dot{\wt{\vect q}} \Vert$ in \eqref{eq:Vdot_bound_T1}. These terms disappear at $t = T_1$, when the finite-time observer converges. For all $t > T_1, \dot{V} < 0$ in $\mathcal{T}\cup \mathcal{S}$ as shown in \emph{Part 4}. Exponential convergence to the origin follows.}
\label{fig:stability_diag_tracking}
}
\end{figure}

\begin{remark}[Additional degree of freedom in gain design]\label{rem:two_design}
In \cite{ye2016ELtracking} we assumed that $\eta = 1$. There is more flexibility in this paper since we allow $\eta > 1$; one can adjust separately, or simultaneously, $\mu$ and $\eta$.   While the interplay between $\mu,\eta,\beta$, and its effect on performance, is difficult to quantify, we observe from extensive simulations that in general, one should make $\beta$ and $\mu$ as small as possible. Where possible, it is better to hold $\mu$ constant and increase $\eta$ to satisfy an inequality involving both, e.g \eqref{eq:cond_mu_VdotA_ND}.  Notice that $\lambda_{\max}(\mat{H}_\mu)$ and $a_3$ are both $\mathcal{O}(\eta)$. Note that as $\mu$ increases $\lambda_{\max}(\mat{H}_\mu)$ does not increase but $a_3$ does decrease. Thus, the convergence rate $a_3/\lambda_{\max}(\mat{L}_\mu)$ is not negatively affected by increasing $\eta$ but is reduced by increasing $\mu$. If only $\mu$ is adjusted to be large (as in \cite{ye2016ELtracking}) then velocity consensus is quickly achieved but position consensus is achieved after a long time. 
\end{remark}

\begin{remark}[Designing the gains]\label{rem:gain_design}
We summarise here the process to design $\mu,\eta,\beta$ to satisfy inequalities detailed in the proof of Theorem~\ref{theorem:main_result}. First, one may select $\beta$ to satisfy \eqref{eq:cond_beta}. Then, $\mu$ should be set to satisfy \eqref{eq:cond_V_PD}. The quantities $\mathcal{X}$ and $\mathcal{Y}$ discussed in Section~\ref{subsec:IC_bound} are then computed with $\eta \gtrsim 1$; we noted below \eqref{eq:cond_mu_VdotA_ND} that $\mathcal{X}$ and $\mathcal{Y}$ are independent of $\eta$ as $\eta$ increases. Having computed $\mathcal{X}$ and $\mathcal{Y}$, the last step is to adjust $\eta$ to ensure $g( \Vert \wt{\vect q}\Vert_2, \Vert\dot{\wt{\vect q}}\Vert_2)$ and $p( \Vert \wt{\vect q}\Vert_2, \Vert\dot{\wt{\vect q}}\Vert_2)$ are positive definite (see Part 3.1 of Theorem~\ref{theorem:main_result} proof). Details of the inequalities to ensure positive definiteness are found in the proofs of Lemma~\ref{lem:bounded_pd_function} and Corollary~\ref{cor:bounded_pd_function_2} in \cite[Section II-A]{ye2018ELtracking_arXiv}.
\end{remark}

\begin{remark}[Robustness]\label{rem:robustness}
The proposed algorithm \eqref{eq:claw_track_sign} is robust in several aspects. First, the exponential stability property implies that small amounts of noise produce small departures from the ideal. Moreover, the signum term in the controller offers robustness to the unknown disturbance $\vect{\zeta}_i(t)$. In contrast, and as discussed in the introduction, adaptive controllers are not robust to unmodelled agent dynamics. 
\end{remark}

\begin{remark}[Controller structure]\label{rem:control_design}
Consider the controller \eqref{eq:claw_track_sign}. The term containing the signum function ensures exact tracking of the leader's trajectory. 
Consider Fig.~\ref{fig:stability_diag_tracking}. For $t < T_1$, the signum term results in the region $\mathcal{T}$, where $\dot{V}$ is sign indefinite. This signum term can in fact drive an agent away from its neighbours due to the nonzero error term $\vect{e}_i(t)$, $t < T_1$. However, for $t < T_1$, the linear terms of the controller (and in particular adjustment of the gains $\eta, \mu$) ensure that $\dot{V} < 0$ in $\mathcal{S}$. This ensures that the followers remain in the bounded region $\mathcal{S}$ centred on the leader.  Such a controller gives added robustness. For example, if $\mathcal{G}_B$ becomes temporarily disconnected, all agents will remain close to the leader so long as $\mathcal{G}_A$ has a directed spanning tree. When connectivity of $\mathcal{G}_B$ is restored, perfect tracking follows; this is illustrated in the simulation below.
\end{remark}

\subsection{Practical Tracking By Approximating the Signum Function}
Although the signum function term in \eqref{eq:claw_track_sign} allows the leader-tracking objective to be achieved, it carries an offsetting disadvantage. Use of the signum function can cause mechanical components to fatigue due to the rapid switching of the control input. Moreover, chattering often results, which can excite the natural frequencies of high-order unmodelled dynamics. A modified controller is now proposed using a continuous approximation of the signum function and we derive an explicit upper bound on the error in tracking of the leader\footnote{We do not consider an approximation for \eqref{eq:dist_observer} because the observer involves \emph{computing} state estimates, as opposed to the \emph{physical} control input for \eqref{eq:euler_lagrange_general_agent}.}. 

Consider the following continuous, model-independent algorithm for the $i^{th}$ agent, replacing \eqref{eq:claw_track_sign}:
\begin{align}\label{eq:claw_track_sig}
\tau_i & = - \eta\sum_{j\in\mathcal{N}_{Ai}} a_{ij}\big( (\vect{q}_i-\vect{q}_j) +\mu(\dot{\vect{q}}_i-\dot{\vect{q}}_j) \big) \nonumber \\
& - \beta \vect z_i \big((\vect q_i- \wh{\vect r}_i)+\mu(\dot{\vect q}_i- \wh{\vect v}_i)\big)
\end{align}
where $\vect z_i(\vect x) \triangleq \vect x/(\Vert \vect x \Vert_2 + \epsilon)$ with $\epsilon > 0$ being the degree of approximation. The function $\vect z_i(\vect x)$ approximates $\sgn(\vect x)$ via the boundary layer concept \cite{edwards1998sliding_book}.
The networked system is:
\begin{align}\label{eq:network_system_sig}
&\mat{M}\ddot{\wt{\vect{q}}} + \mat{C}\dot{\wt{\vect{q}}} + \eta(\mathcal{L}_{22}\otimes\mat{I}_p)(\wt{\vect{q}}+ \mu\dot{\wt{\vect{q}}}) + \vect{g} + \vect{\zeta}\nonumber\\
&  +\beta \vect z (\vect{s}+\mu\dot{\vect s}) + \mat{M}(\vect{1}_n\otimes\ddot{\vect{q}}_0) + \mat{C}(\vect{1}_n\otimes\dot{\vect{q}}_0) = \vect 0
\end{align}
where $\vect z(\vect{s}+\mu\dot{\vect s}) = [\vect z_1(\vect s_1+ \mu \dot{\vect s}_2)^\top, ..., \vect z_n(\vect s_n+ \mu \dot{\vect s}_n)^\top]^\top$. Note that $\Vert \vect z_i (\vect x_i) \Vert_2 < 1$ for any $\epsilon > 0$.

The computation of the quantities $\mathcal{X}, \mathcal{Y}$ in subsection~\ref{subsec:IC_bound} is unchanged. Because of similarity, we do not provide a complete proof here; a sketch is outlined and we leave the minor adjustments to the reader. Consider the same Lyapunov-like function as in \eqref{eq:V_01_quad}, with $\mu$ sufficiently large to ensure $\mat{H}_\mu > 0$. The derivative of \eqref{eq:V_01_quad} with respect to time, along the trajectories of \eqref{eq:network_system_sig}, is given by
\begin{align}\label{eq:Vdot_combined_sig}
\dot{V} & =  -\mu^{-1}\Big[  \frac{1}{2}\eta\wt{\vect q}^\top \mat X \wt{\vect q} + \frac{1}{2}\mu^2\eta\dot{\wt{\vect q}}^\top \mat X \dot{\wt{\vect q}} - \dot{\wt{\vect q}}^\top \mat{\Gamma}_p \mat M \dot{\wt{\vect q}}\nonumber \\
& \qquad  - \wt{\vect q}^\top \mat{\Gamma}_p \mat C^\top \dot{\wt{\vect q}} +  \vect x^\top \mat \Gamma_p \big(\mat C(\vect 1_n \otimes \dot{\vect q}_0 ) + \Delta \big) \nonumber \\ 
& \qquad + \beta \sum_{i = 1}^n \gamma_i \vect x_i^\top \vect z_i(\vect x_i + \vect y_i)\Big]
\end{align}

Let $t_{\mathcal{P}}$ and $t_{\mathcal{Q}}$ be defined as in \emph{Part 3.2} of the proof of Theorem~\ref{theorem:main_result}. One can compute that, for $t \in t_{\mathcal{P}}$, there holds
\begin{align}\label{eq:Vdot_tP_apprx}
\dot{V} & \leq -\mu^{-1} \Big[ \varphi(\mu, \eta ){\Vert\dot{\wt{\vect q}}\Vert_2}^2 +\frac{1}{2} \eta \lambda_{\min} (\mat X) {\Vert \wt{\vect q}\Vert_2}^2  \nonumber \\
& \quad -2 k_C k_p\Vert \wt{\vect q}\Vert_2 \Vert \dot{\wt{\vect q}}\Vert_2 - k_C \Vert \wt{\vect q}\Vert_2 {\Vert \dot{\wt{\vect q}}\Vert_2}^2 - (k_C k_p^2 + \xi)\Vert \vect x\Vert_2   \nonumber \\
& \quad - \beta\sum_{i = 1}^n \Vert \vect x_i \Vert_2 \Big] \leq -\mu^{-1} p( \Vert \wt{\vect q}\Vert_2, \Vert\dot{\wt{\vect q}}\Vert_2 ) 
\end{align}
where $p(\cdot, \cdot)$ was defined in \eqref{eq:Vdot_bound_T1}. This is because there holds $\Vert \vect x_i^\top \vect z_i(\vect x_i + \vect y_i) \Vert_2 < \Vert \vect x_i \Vert_2$, and $\bar\gamma = 1$. In other words, any $\mu, \eta$ which ensures boundedness of the trajectories of \eqref{eq:network_system}, will also ensure that the trajectories of \eqref{eq:network_system_sig} remain bounded in $\mathcal{S} \cup \mathcal{T}$ for all time.

Consider now $t\in t_{\mathcal{Q}}$, and observe that $\vect x_i^\top \vect z_i(\vect x_i) = {\Vert \vect x_i \Vert_2}^2/(\Vert \vect x_i \Vert_2 + \epsilon)$. It follows that
\begin{align}
\vect x^\top &  \mat \Gamma_p \big(\mat C(\vect 1_n \otimes \dot{\vect q}_0 ) + \Delta \big) + \beta \sum_{i = 1}^n \gamma_i \vect x_i^\top \vect z_i(\vect x_i) \nonumber \\
& = \sum_{i = 1}^n \gamma_i \vect x_i^\top (\mat C_i \dot{\vect q}_0 + \mat M_i \ddot{\vect q}_0 + \vect g_i + \vect \zeta_i ) + \beta \gamma_i \vect x_i^\top \vect z_i(\vect x_i) \nonumber \\
& \leq  \sum_{i = 1}^n \gamma_i \Big[ - (k_C {k_p}^2 + \xi) \Vert \vect x_i \Vert_2 + \beta\frac{ {\Vert \vect x_i \Vert_2}^2 }{\Vert \vect x_i \Vert_2 + \epsilon}  \Big] \nonumber \\
& \quad + k_C k_p\Vert \wt{\vect q}\Vert_2 \Vert \dot{\wt{\vect q}}\Vert_2 + \mu k_C k_p{\Vert \dot{\wt{\vect q}}\Vert_2}^2
\end{align}
which in turn yields
\begin{align}\label{eq:Vdot_tQ_apprx}
\dot{V} & \leq -\mu^{-1} \Big[ \varphi(\mu, \eta ){\Vert\dot{\wt{\vect q}}\Vert_2}^2 +\frac{1}{2} \eta \lambda_{\min} (\mat X) {\Vert \wt{\vect q}\Vert_2}^2  \nonumber \\
& \quad -2 k_C k_p\Vert \wt{\vect q}\Vert_2 \Vert \dot{\wt{\vect q}}\Vert_2 - k_C \Vert \wt{\vect q}\Vert_2 {\Vert \dot{\wt{\vect q}}\Vert_2}^2   \nonumber \\
& \quad - \sum_{i = 1}^n (k_C k_p^2 + \xi)\Vert \vect x_i \Vert_2  + \beta\underline{\gamma} \sum_{i = 1}^n \frac{ {\Vert \vect x_i \Vert_2}^2 }{\Vert \vect x_i \Vert_2 + \epsilon} \Big] 
\end{align}
If $\beta$ satisfies \eqref{eq:cond_beta} then there holds
\begin{align}
\beta\underline{\gamma} & \sum_{i = 1}^n \frac{ {\Vert \vect x \Vert_2}^2 }{\Vert \vect x_i \Vert_2 + \epsilon}- \sum_{i = 1}^n (k_C k_p^2 + \xi)\Vert \vect x_i \Vert_2 \nonumber \\
& \geq \beta \underline{\gamma} \sum_{i = 1}^n \Big[ \frac{ {\Vert \vect x_i \Vert_2}^2 }{\Vert \vect x_i \Vert_2 + \epsilon} - \Vert \vect x_i \Vert_2 \Big]  \\
& = -\beta \underline{\gamma} \sum_{i = 1}^n \Big[ \frac{ {\Vert \vect x_i \Vert_2} \epsilon }{\Vert \vect x_i \Vert_2 + \epsilon} \Big] > -\beta \underline{\gamma} n \epsilon 
\end{align}
because $\Vert \vect x_i \Vert_2/(\Vert \vect x_i \Vert_2 + \epsilon) < 1$ for all $\epsilon > 0$. From this, we conclude that $\dot{V} \leq \dot{V}_A + \beta\underline{\gamma}n \epsilon$. Recall also that any $\mu, \eta$ which ensures $p(\cdot, \cdot)$ is positive definite in $\mathcal{S}$ also ensures that $\dot{V}_A$ is negative definite in $\mathcal{D}$. Similar to \emph{Part 4} of the proof of Theorem~\ref{theorem:main_result}, one has $\dot{V} \leq \psi V + \beta \underline{\gamma} n \epsilon$, for some $\psi > 0$. We conclude using  \cite[Lemma 3.4 (Comparison Lemma)]{khalil2002nonlinear} that
\begin{align}
V(t) & \leq V(0) e^{-\psi t} + \beta\underline{\gamma} n \epsilon \int^t_n e^{-\psi(t-\tau)} . \text{d}\tau \\
& \leq e^{-\psi t} \big[ V(0) + \beta\underline{\gamma} n \epsilon/\psi \big] + \beta\underline{\gamma} n \epsilon/\psi
\end{align}
which implies that $V(t)$ decays exponentially fast to the bounded set $\{ [\wt{\vect q}^\top, \dot{\wt{\vect q}}^\top]^\top : V \leq \beta\underline{\gamma} n \epsilon/\psi \}$. From the fact that $V \geq \lambda_{\min}(\mat N_\mu) \Vert {[\wt{\vect q}^\top, \dot{\wt{\vect q}}^\top]^\top \Vert_2}^2$, we conclude that the trajectories of \eqref{eq:network_system_sig} converge to the bounded set
\begin{equation}
\Omega = \left\{ [\wt{\vect q}^\top, \dot{\wt{\vect q}}^\top] : \Vert [\wt{\vect q}^\top, \dot{\wt{\vect q}}^\top] \Vert_2 \leq \Big(\frac{\beta \underline{\gamma} n \epsilon}{\psi \lambda_{\min}(\mat N_\mu) }\Big)^{ \frac{1}{2} } \right\}
\end{equation}


\section{Leader Tracking on Dynamic Networks}\label{sec:switch_top}
In this section, we consider the case where \emph{the sensing graph $\mathcal{G}_A(t)$ is dynamic, i.e. time-varying.} We assume that there is a finite set $\mathcal{J}$ of $m$ possible network topologies, given as $\bar{\mathcal{G}}_{A} = \{\mathcal{G}_{A,j} = (\mathcal{V}, \mathcal{E}_j, \mathcal{A}_j) : j \in \mathcal{J} \}$, where $\mathcal{J} = \{1, \hdots, m\}$ is the index set. We assume further that $\mathcal{G}_{A,j}$, $\forall\,j$, contains a directed spanning tree, with $v_0$ as the root node and with no edges incoming to $v_0$. Define $\sigma(t) : [0, \infty) \mapsto  \mathcal{J}$ as the piecewise constant switching signal which determines the switching of $\mathcal{G}_A(t)$, with a finite number of switches. The switching times are indexed as $t_1, t_2, \hdots$ and we assume that $\sigma(t)$ is such that $t_{i+1} - t_i > \pi_d > 0$ for all $i$, where $\pi_d$ is the dwell time.  

The dynamic network is modelled by the graph $\mathcal{G}_A(t) = \mathcal{G}_{A,\sigma(t)}$, which in turn implies that the Laplacian associated with $\mathcal{G}_{A}(t)$ is dynamic, given by $\mathcal{L}_A(t) = \mathcal{L}_{A,\sigma(t)}$. Denote $\mathcal{L}_{22}(t) = \mathcal{L}_{22,\sigma(t)}$ as the lower block matrix of $\mathcal{L}_A(t)$, partitioned as in \eqref{eq:laplacian_partition}. It is straightforward to show that the follower network dynamics is given by
\begin{align}\label{eq:network_system_dyn_top}
&\ddot{\wt{\vect{q}}} \in ^{a.e.} \mathcal{K} \Big[ -\mat{M}^{-1}\big[ \mat{C}\dot{\wt{\vect{q}}} + \eta(\mathcal{L}_{22,\sigma(t)}\otimes\mat{I}_p)(\wt{\vect{q}}+ \mu\dot{\wt{\vect{q}}}) + \vect{g} + \vect{\zeta}\nonumber\\
&  +\beta\sgn\left(\vect s +\mu\dot{\vect s}\right) + \mat{M}(\vect{1}_n\otimes\ddot{\vect{q}}_0) + \mat{C}(\vect{1}_n\otimes\dot{\vect{q}}_0)\big]\Big]
\end{align} 

We now seek to exploit an established result which states that a switched system is exponentially stable if the switching is sufficiently slow \cite{liberzon2012switching}, and its `frozen' versions of the various systems arising between switching instants are all exponentially stable. Specifically, the following result holds
\begin{theorem}[{\cite[Theorem 3.2]{liberzon2012switching}}]\label{thm:switch_stable}
Consider the family of systems $\dot{\vect{x}} = \vect{f}_j(\vect{x}), j\in\mathcal{J}$. Suppose that, in a domain $D \subseteq \mathbb{R}^n$ containing $\vect{x} = \vect{0}$, $\exists \mathcal{C}^1$ functions $V_j : D \mapsto \mathbb{R}, j\in \mathcal{J}$, and positive constants $c_j, d_j$, $\Lambda_j$ such that
\begin{equation}
c_j {\Vert \vect{x}\Vert_2}^2 \leq V_j(\vect{x}) \leq d_j {\Vert \vect{x}\Vert_2}^2\, , \quad \forall\,x\in D,\; \forall\,j \in \mathcal{J}
\end{equation}
and $\dot{V}_j(\vect{x}) \leq - \Lambda_j V_j(\vect{x})\, , \quad \forall\,x\in D,\; \forall\, j \in \mathcal{J} $. Define $\kappa \triangleq \sup_{p,q \in \mathcal{J}} \{ V_p(\vect{x})/V_q(\vect{x}) : \vect{x} \in D \}$, and suppose further that $0< \kappa<1$. Then, for $\vect{x}(0) \in D$, the origin $\vect{x} = \vect{0}$ of the switched system $\dot{\vect{x}} = \vect{f}_{\sigma(t)}(\vect{x})$ is exponentially stable for every switching signal $\sigma(t)$ with dwell time $\pi_d > \log(\kappa)/\Lambda$, where $\Lambda = \min_{j\in\mathcal{J}} \Lambda_j$. 
\end{theorem}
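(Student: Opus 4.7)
The plan is to track a single piecewise-defined Lyapunov function $V(t) \triangleq V_{\sigma(t)}(\vect{x}(t))$ along the trajectory, and bound its evolution in two alternating regimes: smooth exponential decay between switches, and multiplicative jumps at switching instants. On any interval $[t_i, t_{i+1})$ on which $\sigma(t) = j$ is constant, the hypothesis $\dot{V}_j \leq -\Lambda_j V_j \leq -\Lambda V_j$ (with $\Lambda = \min_{j \in \mathcal{J}}\Lambda_j$) integrated along $\dot{\vect{x}} = \vect{f}_j(\vect{x})$ gives $V(t_{i+1}^-) \leq V(t_i)\,e^{-\Lambda(t_{i+1}-t_i)} \leq V(t_i)\,e^{-\Lambda \pi_d}$, where the last inequality uses the dwell-time lower bound $t_{i+1}-t_i > \pi_d$.

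At the switching instant $t_{i+1}$, where $\sigma$ moves from $p$ to $q$, the definition of $\kappa$ supplies $V_q(\vect{x}(t_{i+1})) \leq \kappa V_p(\vect{x}(t_{i+1}))$, so the Lyapunov value jumps by at most the factor $\kappa$. Combining with the between-switch decay yields the one-cycle contraction $V(t_{i+1}) \leq \kappa e^{-\Lambda \pi_d} V(t_i)$. Iterating over the $N(t)$ switches in $[0,t]$ and accounting for the residual decay between $t_{N(t)}$ and $t$ produces the bound $V(t) \leq (\kappa e^{-\Lambda \pi_d})^{N(t)} e^{-\Lambda(t - t_{N(t)})} V(0)$. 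The dwell-time condition $\pi_d > \log(\kappa)/\Lambda$ is exactly what makes $\kappa e^{-\Lambda \pi_d} < 1$, so $V(t)$ decays geometrically in $N(t)$, and a short computation converts this into an exponential decay in $t$ with a uniform rate $\alpha > 0$ depending only on $\kappa$, $\Lambda$, and $\pi_d$.

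The quadratic sandwich bounds $c_j \Vert \vect{x}\Vert_2^2 \leq V_j(\vect{x}) \leq d_j\Vert\vect{x}\Vert_2^2$ then transfer this decay to the state: setting $c_{\min} = \min_j c_j$ and $d_{\max} = \max_j d_j$, one obtains $\Vert \vect{x}(t)\Vert_2 \leq \sqrt{d_{\max}/c_{\min}}\; e^{-\alpha t/2}\, \Vert\vect{x}(0)\Vert_2$, with monotone decrease of $V$ across each dwell-plus-switch cycle ensuring $\vect{x}(t) \in D$ whenever $\vect{x}(0) \in D$, so the dynamics and the bounds are applied in a region where they are valid. The main obstacle is the bookkeeping across switches: because $V$ is only piecewise continuous (it may jump at each $t_i$), the comparison argument must be applied on each dwell interval separately and then reassembled into a global estimate, and one must show that the resulting rate $\alpha$ is uniform over all admissible switching signals $\sigma(t)$ satisfying the dwell-time constraint. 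The strict contraction $\kappa e^{-\Lambda \pi_d} < 1$ is the key quantitative ingredient that makes this reassembly possible.
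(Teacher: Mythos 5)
The paper does not prove this theorem: it is quoted verbatim from \cite{liberzon2012switching} and used as a black box, so there is no in-paper proof to compare against. Your argument is the standard dwell-time proof of that cited result and is essentially correct: exponential decay of $V_{\sigma}$ at rate $\Lambda$ on each dwell interval, a multiplicative jump by at most the factor $\kappa$ at each switching instant, the observation that $\pi_d > \log(\kappa)/\Lambda$ is exactly the condition $\kappa e^{-\Lambda \pi_d} < 1$ giving a per-cycle contraction, and the quadratic sandwich bounds to transfer the decay of $V$ to $\Vert \vect{x}(t)\Vert_2$. Two small caveats. First, your closing claim that monotone decrease of $V$ over each dwell-plus-switch cycle keeps $\vect{x}(t)$ in $D$ is not automatic: $V$ may increase by the factor $\kappa$ at a switch, and even a bound $V(t) \leq \kappa V(0)$ only confines the state to a ball of radius $\sqrt{\kappa\, d_{\max}/c_{\min}}\,\Vert\vect{x}(0)\Vert_2$, which lies in $D$ only for initial conditions sufficiently deep inside $D$; the conclusion is local and strictly one should work on a sublevel set contained in $D$. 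Second, the hypothesis $0<\kappa<1$ as printed in the paper cannot hold, since taking $p=q$ in the supremum forces $\kappa \geq 1$; your proof implicitly (and correctly) treats the meaningful case $\kappa \geq 1$, for which the dwell-time condition is non-trivial.
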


Under Assumptions~\ref{assm:leader_traj} and \ref{assm:IC}, we know from the previous Theorem~\ref{theorem:main_result} that for each $j^{th}$ subsystem, 
\begin{align}\label{eq:network_system_dyn_top_j}
&\ddot{\wt{\vect{q}}} \in ^{a.e.} \mathcal{K} \Big[ -\mat{M}^{-1}\big[ \mat{C}\dot{\wt{\vect{q}}} + \eta(\mathcal{L}_{22,j}\otimes\mat{I}_p)(\wt{\vect{q}}+ \mu\dot{\wt{\vect{q}}}) + \vect{g} + \vect{\zeta}\nonumber\\
&  +\beta\sgn\left(\vect s +\mu\dot{\vect s}\right) + \mat{M}(\vect{1}_n\otimes\ddot{\vect{q}}_0) + \mat{C}(\vect{1}_n\otimes\dot{\vect{q}}_0)\big]\Big]
\end{align} 
there exist control gains $\mu_j, \eta_j, \beta_j$ which exponentially achieve the leader tracking objective. In seeking to apply Theorem~\ref{thm:switch_stable} to the system \eqref{eq:network_system_dyn_top}, we obtain, for each $j\in\mathcal{J}$ with $V_j$ given in \eqref{eq:V_01_quad}, the values $\lambda_{\min}(\mat{N}_{\mu,j}) = c_j$, $\lambda_{\max}(\mat{L}_{\mu,j}) = d_j$ and $\Lambda_j = a_{3,j}/\lambda_{\max}(\mat{L}_{\mu,j})$ where $a_{3,j}$ was computed below \eqref{eq:Vdot_tQ_exp}. It follows that $\Lambda = \min_{j\in\mathcal{J}} a_{3,j}/\lambda_{\max}(\mat{L}_{\mu,j})$, and one can also obtain that $\kappa = \max_{j\in\mathcal{J}} \lambda_{\max}(\mat{L}_{\mu,j})/ \min_{j\in\mathcal{J}} \lambda_{\min}(\mat{N}_{\mu,j})$.

\begin{theorem}
Under Assumptions~\ref{assm:leader_traj} and \ref{assm:IC}, with dynamic topology given by $\mathcal{G}_A(t) = \mathcal{G}_{A,\sigma(t)}$, the leader tracking objective is achieved using \eqref{eq:claw_track_sign} if $1)$ the control gains $\mu, \eta, \beta$ satisfy a set of lower bounding inequalities, and $2)$ the dwell time $\pi_d$ satisfies the inequality $\pi_d > \log(\kappa)/\Lambda$, where $\kappa, \Lambda$ are as defined in the immediately preceding paragraph. 
\end{theorem}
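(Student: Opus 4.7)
The strategy is to cast the problem as a direct application of Theorem~\ref{thm:switch_stable}, exploiting the fact that each frozen-mode subsystem \eqref{eq:network_system_dyn_top_j} has already been shown by Theorem~\ref{theorem:main_result} to admit an exponentially decreasing Lyapunov-like function. For each $j \in \mathcal{J}$, let $V_j$ denote the quadratic form \eqref{eq:V_01_quad} built from the mode-$j$ Laplacian $\mathcal{L}_{22,j}$ and its associated diagonal weighting matrix $\mat{\Gamma}_j$ guaranteed by Lemma~\ref{cor:dst_laplacian_pd}, with corresponding matrices $\mat{N}_{\mu,j}, \mat{H}_{\mu,j}, \mat{L}_{\mu,j}$. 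First I would apply Theorem~\ref{theorem:main_result} individually to each fixed topology, obtaining mode-specific gain triples $(\mu_j, \eta_j, \beta_j)$. To enforce stability simultaneously across all modes with a single controller, I would then take $\mu = \max_j \mu_j$, $\eta = \max_j \eta_j$, $\beta = \max_j \beta_j$. Because every inequality assembled in Remark~\ref{rem:gain_design} is a lower bounding inequality on the gains, the uniform triple satisfies all mode-specific stability conditions; the quantities $\mathcal{X}, \mathcal{Y}$ from Section~\ref{subsec:IC_bound} are then recomputed using the uniform $(\mu, \eta)$ and are mode-independent.

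The second step is to read off the constants required by Theorem~\ref{thm:switch_stable}. For each $j$, the quadratic form $V_j$ satisfies the sandwich bound
\begin{equation*}
\lambda_{\min}(\mat{N}_{\mu,j}) \Vert \vect{z} \Vert_2^2 \leq V_j(\vect{z}) \leq \lambda_{\max}(\mat{L}_{\mu,j}) \Vert \vect{z} \Vert_2^2,
\end{equation*}
with $\vect{z} = [\wt{\vect{q}}^\top, \dot{\wt{\vect{q}}}^\top]^\top$, giving $c_j = \lambda_{\min}(\mat{N}_{\mu,j})$ and $d_j = \lambda_{\max}(\mat{L}_{\mu,j})$. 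By Part~4 of the proof of Theorem~\ref{theorem:main_result}, once $t \geq T_1$ one has $\dot{V}_j \leq -\Lambda_j V_j$ on $\mathcal{D}$ with $\Lambda_j = a_{3,j}/\lambda_{\max}(\mat{L}_{\mu,j})$, where $a_{3,j}$ is the mode-$j$ version of the constant $a_3$ identified below \eqref{eq:exp_converg_rate}. Setting $\Lambda = \min_j \Lambda_j$ and $\kappa = \max_p \lambda_{\max}(\mat{L}_{\mu,p})/\min_q \lambda_{\min}(\mat{N}_{\mu,q})$ recovers precisely the constants named in the theorem statement. Imposing the dwell time condition $\pi_d > \log(\kappa)/\Lambda$ then allows Theorem~\ref{thm:switch_stable} to yield exponential decay of $\vect{z}(t)$ for all $t \geq T_1$.

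The main obstacle is the transient interval $[0,T_1)$, during which the finite-time observer has not converged and $\dot{V}_j$ is sign-indefinite on the inner region $\mathcal{T}$, so the hypothesis $\dot{V}_j \leq -\Lambda_j V_j$ of Theorem~\ref{thm:switch_stable} fails. I would handle this by transplanting the argument from Part~3.2 of Theorem~\ref{theorem:main_result}: since the uniform gain choice enforces $\dot{V}_j < 0$ on $\mathcal{S}$ for every $j$, and $V_j$ is bounded above by a mode-independent $\mathcal{Z}$ throughout $\mathcal{T}$, the trajectory of \eqref{eq:network_system_dyn_top} cannot escape $\mathcal{T} \cup \mathcal{S}$ even in the presence of switches on $[0,T_1)$. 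The only concern is that at each switching instant, the working Lyapunov function changes from $V_{\sigma(t^-)}$ to $V_{\sigma(t^+)}$, whose value is bounded by $\kappa V_{\sigma(t^-)}$; finiteness of the switching count on $[0,T_1)$ (guaranteed by the dwell time) keeps this cumulative jump bounded, and hence the state uniformly bounded within a recomputed sublevel set. At $t = T_1$, Theorem~\ref{thm:dist_observer_cao} delivers $\vect{e}(t) = \dot{\vect{e}}(t) = \vect{0}$ so that $\dot{V}_j$ becomes strictly negative on $\mathcal{D}$ for every $j$, and the Liberzon-style dwell time argument applies from $T_1$ onward, completing the proof.
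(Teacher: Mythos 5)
Your proposal is correct and follows essentially the same route as the paper: select the uniform gains $\mu = \max_j \mu_j$, $\eta = \max_j \eta_j$, $\beta = \max_j \beta_j$, establish boundedness of the trajectories of \eqref{eq:network_system_dyn_top} before the finite-time observer converges, and then invoke Theorem~\ref{thm:switch_stable} with the stated $\kappa$ and $\Lambda$ for $t \geq T_1$. Your treatment of the transient interval $[0,T_1)$ --- in particular, tracking the jump of at most a factor $\kappa$ in the working Lyapunov function at each switching instant --- is actually more explicit than the paper's one-line assertion of boundedness, but it is a filling-in of detail rather than a different argument.
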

\begin{proof}
By selecting $\mu = \max_{j\in \mathcal{J}} \mu_j$, $\eta = \max_{j\in \mathcal{J}} \eta_j$, and $\beta = \max_{j\in \mathcal{J}} \beta_j$, we guarantee each $j^{th}$ subsystem \eqref{eq:network_system_dyn_top_j} is exponentially stable, and also guarantee the boundedness of the trajectories of \eqref{eq:network_system_dyn_top} before the finite-time observer has converged. After convergence of the finite-time observer, application of Theorem~\ref{thm:switch_stable} using the quantities of $\kappa$ and $\Lambda$ outlined above delivers the conclusion that \eqref{eq:network_system_dyn_top} is exponentially stable, i.e. the leader tracking objective is achieved.
\end{proof}

\section{Simulations}\label{sec:sim}

A simulation is now provided to demonstrate the algorithm \eqref{eq:claw_track_sign}. Each agent is a two-link robotic arm and five follower agents must track the trajectory the leader agent. The equations of motion are given in \cite[pp. 259-262]{spong2006robot}. The generalised coordinates for agent $i$ are $\vect{q}_i = [q_i^{(1)}, q_i^{(2)}]^\top$, which are the angles of each link in radians. The agent parameters and initial conditions are given in Table~\ref{tab:agent_param}, and are chosen arbitrarily. Several aspects of the simulation are designed to highlight the robustness of the algorithm. First, the topology is assumed to be switching, with the graph $\mathcal{G}_A(t)$ switching periodically between the three graphs indicated in Fig.~\ref{fig:GA_switch}, at a frequency of $1\;\mathrm{Hz}$. Graph $\mathcal{G}_B(t)$ switches between the three graphs indicated in Fig.~\ref{fig:GB_switch}, also at a frequency of $1\;\mathrm{Hz}$. Moreover, if $\mathcal{G}_{A}(t) = \mathcal{G}_{A,i}$ then  $\mathcal{G}_{B}(t) = \mathcal{G}_{B,i}$ for $i = 1, 2, 3$. Additionally, $\mathcal{G}_B(t)$ is entirely disconnected for $t\in [10,20)$ of the simulation. Last, each agent has a disturbance $\vect\zeta_i(t) = [\sin(i\times0.1t), \cos(i\times0.1t)]^\top$ for $i = 1,\hdots, 5$.  All edges of $\mathcal{G}_A(t)$ and $\mathcal{G}_B(t)$ have edge weights of $5$. The control gains are set as $\mu = 1.5$, $\eta = 16$, $\beta = 25$; they are first computed using the inequalities and then adjusted because the inequalities can lead to conservative gain choices. For the observer, set $\omega_1 = 1$, $ \omega_2 = 5$. The leader trajectory is
\begin{equation*}
\vect q_0(t) =\begin{bmatrix}
0.5\sin(t) - 0.2\sin(0.5t) \\ 0.4\left(2\sin(t) + \frac{\sin(2t)}{(2)} +\frac{\sin(3t)}{(3)}+ \frac{\sin(4t)}{(4)}\right)
\end{bmatrix}
\end{equation*} 
Figure \ref{fig:trajtrack_q} shows the generalised coordinates $q^{(1)}$ and $q^{(2)}$. The generalised velocities, $\dot{q}^{(1)}$ and $\dot{q}^{(2)}$ are shown in Fig. \ref{fig:trajtrack_qdot}. The well studied observer results are omitted \cite{cao2010SL_estimators}. Consider Fig.~\ref{fig:trajtrack_q}. Clearly, $\vect{q}_i(t)$ has almost tracked the leader by $t=10$, but the distributed observer graph $\mathcal{G}_B(t)$ disconnects for $t\in[10,20)$. As discussed in Remark~\ref{rem:control_design}, the controller \eqref{eq:claw_track_sign} has robustness to network failure, since the linear term in \eqref{eq:claw_track_sign} ensures the trajectories remain bounded as long as $\mathcal{G}_B(t)$ is disconnected (thus followers do not possess accurate knowledge of $\vect{q}_0, \dot{\vect{q}}_0$). In the simulation, we observe leader tracking is achieved once $\mathcal{G}_B(t)$ reconnects at $t>20$. Figures~\ref{fig:trajtrack_q_high_mu} and \ref{fig:trajtrack_qdot_high_mu} show the generalised coordinates and generalised velocity, respectively, for the same simulation set up but with an increase of $\mu = 4$ from $\mu = 1.5$. The effects are clear, when we compare to Fig.~\ref{fig:trajtrack_q} and \ref{fig:trajtrack_qdot}. First, the rate of velocity synchronisation relative to the rate of position synchronisation is much larger when $\mu=4$. On the other hand, overall convergence rate is decreased; it takes longer for position and velocity synchronisation to occur, with reasons presented in Remark~\ref{rem:two_design}. However, the increased $\mu$ has a benefit of making the follower agents stay in a smaller ball around the leader when $\mathcal{G}_B(t)$ is disconnected for $t\in[10,20)$, i.e. the tracking error for $t\in[10,20)$ is smaller. This is because increasing $\mu$ decreases the size of $\mathcal{T}$, where $\dot{V}$ is sign indefinite, as shown in Fig.~\ref{fig:stability_diag_tracking}. Last, we show a simulation which utilises the continuous approximation algorithm \eqref{eq:claw_track_sig}. The simulation setup is given above, and we let $\epsilon = 0.5$. Figure~\ref{fig:trajtrack_q_sig_magnify} shows the generalised coordinates of the resulting simulation, with a magnification of the plot for the final 2 seconds of simulation. One can clearly see that practical tracking is achieved, with a small error. The velocity plot is omitted.

\begin{table*}
\caption{Agent parameters used in simulation}
\label{tab:agent_param}
\begin{center}
\vspace*{-11pt}
{\renewcommand{\arraystretch}{1.1}
\begin{tabular}{c c c c c c c c c c c c c}
\scriptsize • & $m_1$ & $m_2$ & $l_1$ & $l_2$ & $l_{c1}$ & $l_{c2}$ & $I_1$ & $I_2$ & $q_i^{(1)}(0)$ & $q_i^{(2)}(0)$ & $\dot{q}_i^{(1)}(0)$ & $\dot{q}_i^{(2)}(0)$  \\ 
\hline 
Agent 1 & 0.5 & 0.4 & 0.4 & 0.3 & 0.2 & 0.15 & 0.1 & 0.05 & 0.1 & 0.9 & -0.5 & -0.6 \\ 
\hline 
Agent 2 & 0.2 & 0.4 & 0.6 & 0.1 & 0.35 & 0.08 & 0.15 & 0.08 & -0.4 & 0.9 & 0.1 & -1.4\\ 
\hline 
Agent 3 & 0.5 & 0.4 & 0.4 & 0.3 & 0.2 & 0.15 & 0.1 & 0.05 & 0.9 & -1.2 & 0.3 & 0.6\\ 
\hline 
Agent 4 & 1 & 0.6 & 0.45 & 0.8 & 0.2 & 0.4 & 0.15 & 0.5 & -2.0 & -2.0 & -1.0 & 0.2\\ 
\hline 
Agent 5 & 0.25 & 0.4 & 0.8 & 0.5 & 0.3 & 0.1 & 0.45 & 0.15 & 0.3 & 1.5 & 1.0 & 1.2\\ 
\hline 
\end{tabular}
}
\end{center}
\end{table*} 

\begin{figure}[!t]
\begin{center}
\includegraphics[width=0.6\linewidth]{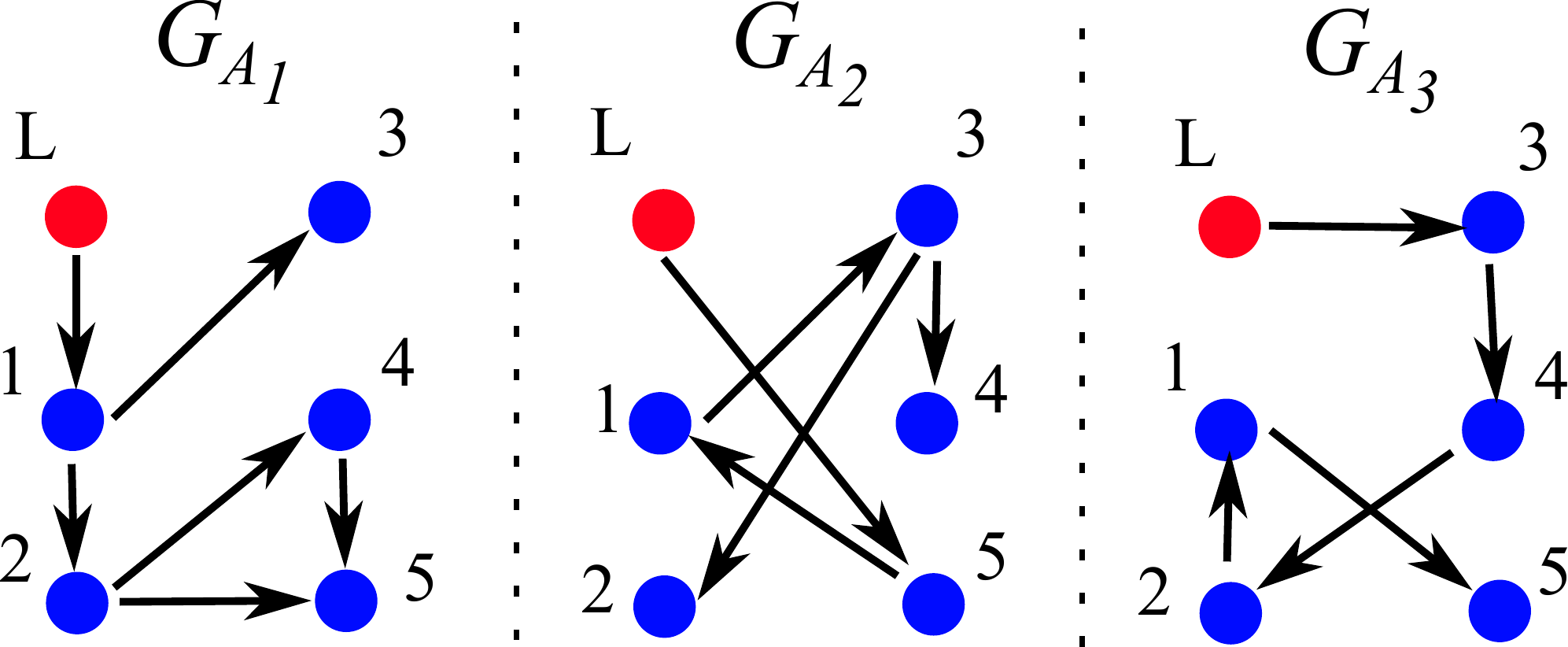}
\caption{In the simulation, graph $\mathcal{G}_{A}(t)$ switches between the above three graphs periodically at a rate of $1\;\textrm{Hz}$.}
\label{fig:GA_switch}
\end{center}
\end{figure}

\begin{figure}[!t]
\begin{center}
\includegraphics[width=0.6\linewidth]{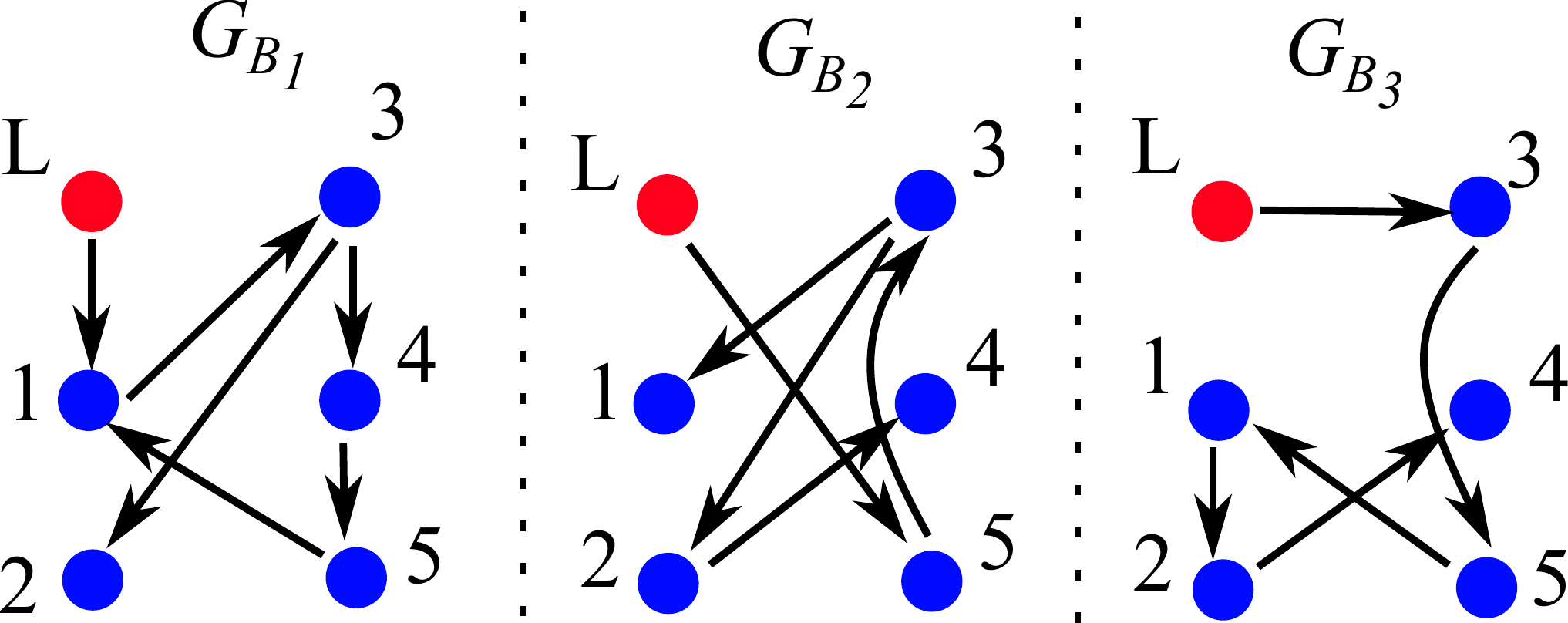}
\caption{Graph $\mathcal{G}_{B}(t)$ switches between the above three graphs periodically at a rate of $1\;\textrm{Hz}$; if $\mathcal{G}_{A}(t) = \mathcal{G}_{A,i}$ then  $\mathcal{G}_{B}(t) = \mathcal{G}_{B,i}$ for $i = 1, 2, 3$.}
\label{fig:GB_switch}
\end{center}
\end{figure}

\begin{figure}
\begin{center}
\vspace*{-11pt}
\includegraphics[height=\columnwidth, angle=-90]{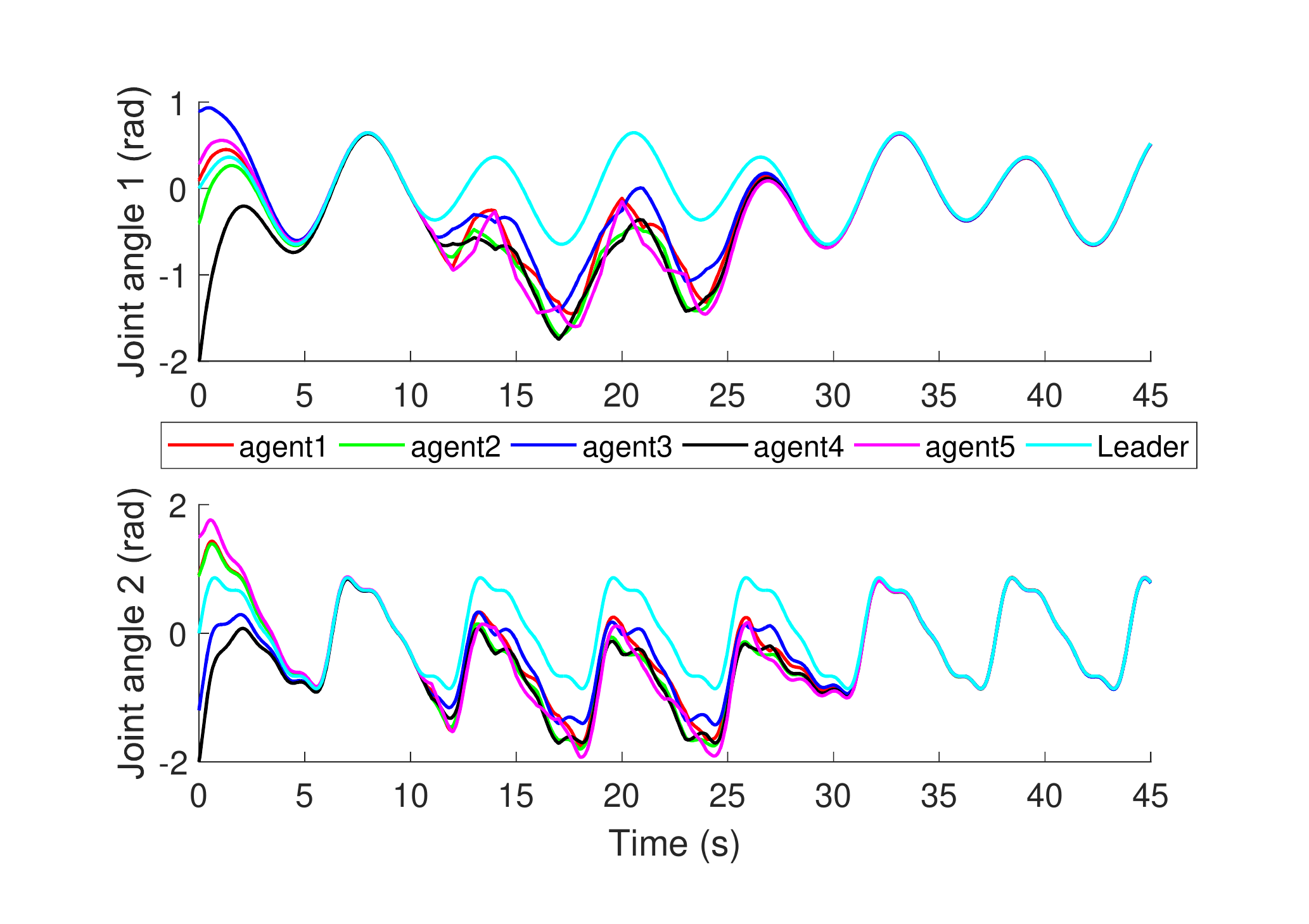}
\caption{Plot of generalised coordinates vs. time; the graph $\mathcal{G}_B(t)$ is disconnected for $t\in[10,20)$.}
\label{fig:trajtrack_q}
\vspace*{-11pt}
\end{center}
\end{figure}
\begin{figure}
\begin{center}
\vspace*{-11pt}
\includegraphics[height=\columnwidth, angle=-90]{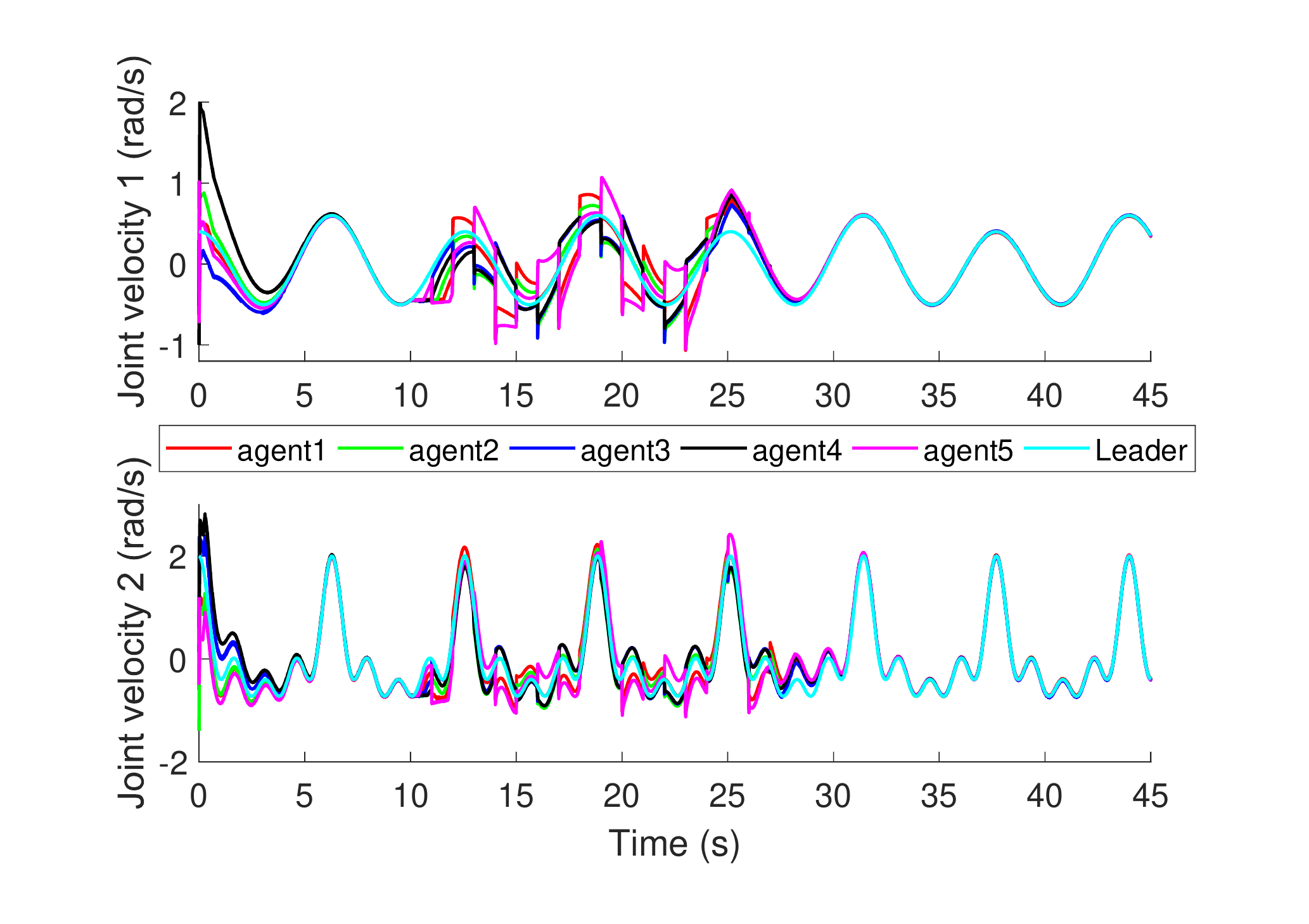}
\caption{Plot of generalised velocity vs. time; the graph $\mathcal{G}_B(t)$ is disconnected for $t\in[10,20)$.}
\label{fig:trajtrack_qdot}
\vspace*{-11pt}
\end{center}
\end{figure}

\begin{figure}
\begin{center}
\vspace*{-11pt}
\includegraphics[height=\columnwidth, angle=-90]{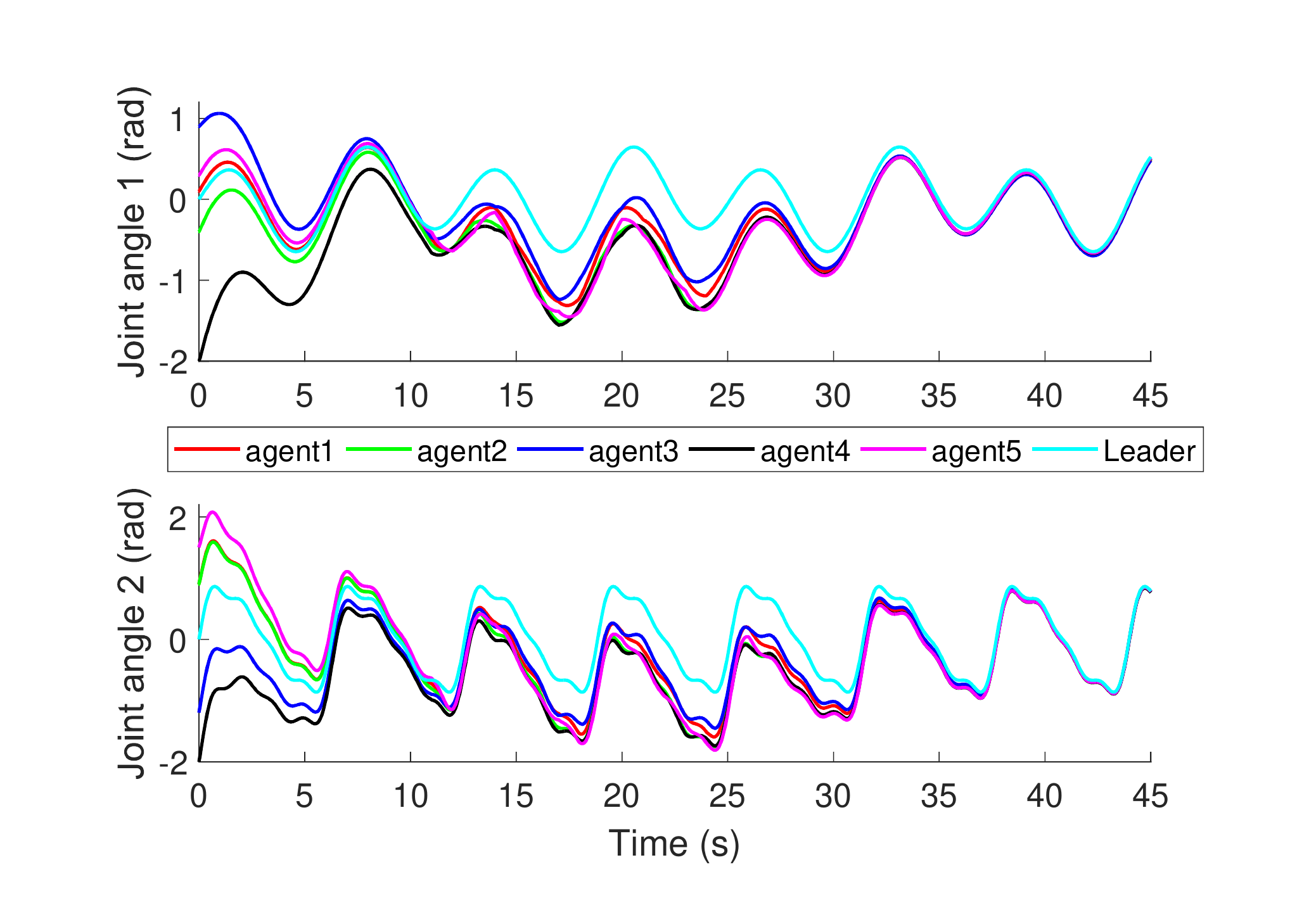}
\caption{Plot of generalised coordinates vs. time; the graph $\mathcal{G}_B(t)$ is disconnected for $t\in[10,20)$. The gain $\mu$ has been increased from $\mu=1.5$ to $\mu=4$.}
\label{fig:trajtrack_q_high_mu}
\vspace*{-11pt}
\end{center}
\end{figure}

\begin{figure}
\begin{center}
\vspace*{-11pt}
\includegraphics[height=\columnwidth, angle=-90]{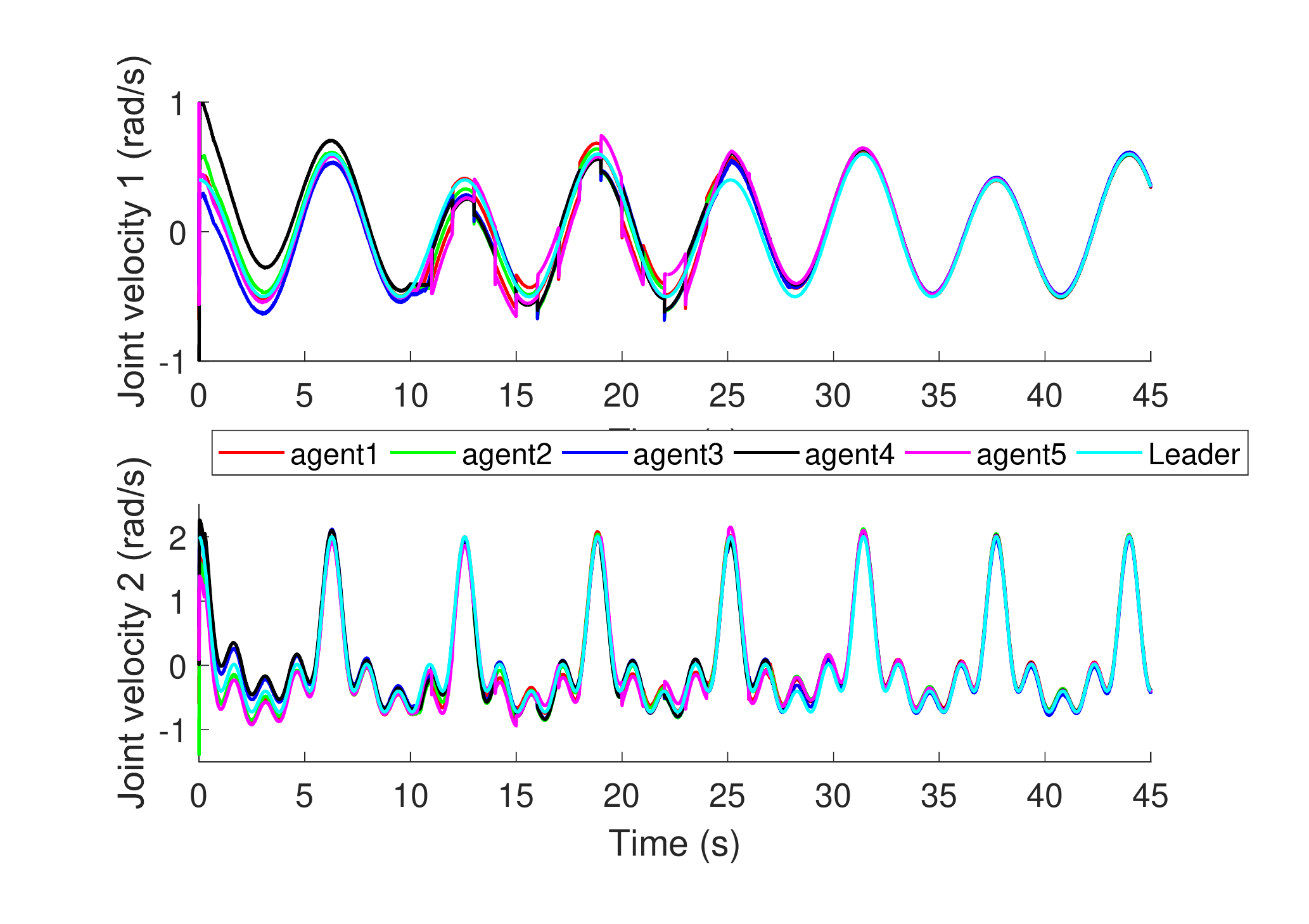}
\caption{Plot of generalised velocity vs. time; the graph $\mathcal{G}_B(t)$ is disconnected for $t\in[10,20)$. The gain $\mu$ has been increased from $\mu=1.5$ to $\mu=4$.}
\label{fig:trajtrack_qdot_high_mu}
\vspace*{-11pt}
\end{center}
\end{figure}

\begin{figure}
\begin{center}
\vspace*{-11pt}
\includegraphics[height=\columnwidth, angle=-90]{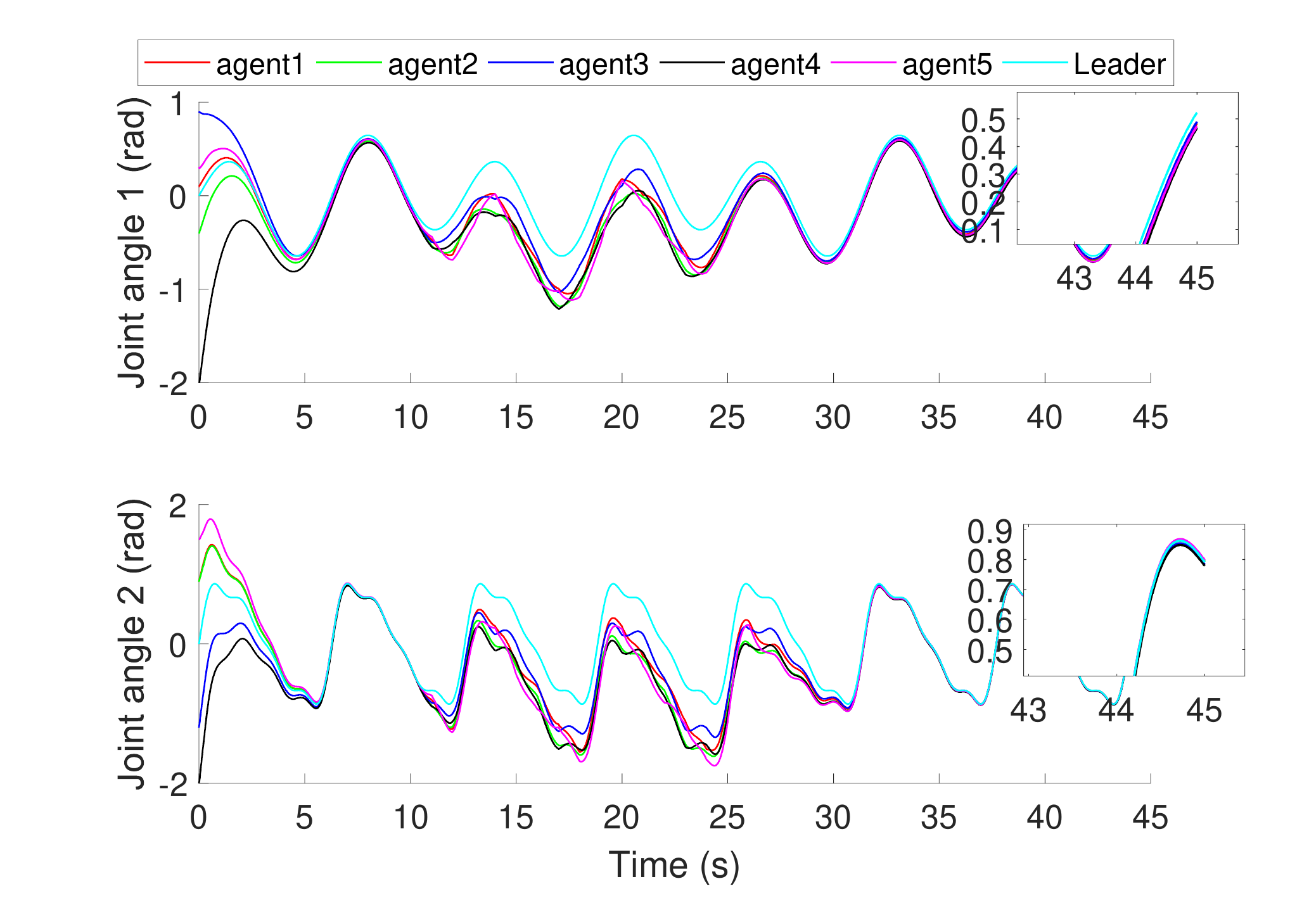}
\caption{Plot of generalised coordinates vs. time; the graph $\mathcal{G}_B(t)$ is disconnected for $t\in[10,20)$. The continuous approximation algorithm \eqref{eq:claw_track_sig} is used with $\epsilon = 0.5$.}
\label{fig:trajtrack_q_sig_magnify}
\vspace*{-11pt}
\end{center}
\end{figure}

\section{Conclusion}\label{sec:conclusion}
In this paper, a distributed, discontinuous model-independent algorithm was proposed for a directed network of Euler-Lagrange agents. It was shown that the leader tracking objective is achieved semi-globally exponentially fast if the directed graph contains a directed spanning tree, rooted at the leader, and if three control gains satisfied a set of lower bounding inequalities. The algorithm was shown to be robust to agent disturbances, unmodelled agent dynamics and modelling uncertainties. A continuous approximation of the algorithm was proposed to avoid chattering, and we then extended the result to include switching topologies. A numerical simulation illustrated the algorithm's effectiveness.

\ifCLASSOPTIONcaptionsoff
  \newpage
\fi



%
%
%

\bibliographystyle{IEEEtran}
\bibliography{MYE_ANU}
%

%
%
%




\end{document}